\documentclass[journal]{IEEEtran}

\usepackage{amsmath}
\usepackage{graphicx}
\usepackage{amssymb}
\usepackage{siunitx}
\usepackage{cite}
\usepackage{mathtools}
\usepackage{bbold}
\usepackage{steinmetz}
\usepackage{stmaryrd}
\usepackage{bbm, dsfont}
\usepackage{amsthm}
\usepackage{verbatim}
\usepackage[version=4]{mhchem}
\usepackage[latin1]{inputenc}
\usepackage{tikz}
\usepackage{dirtytalk}
\usetikzlibrary{shapes,arrows}
\usepackage{enumerate}
\usepackage{color}
\usepackage{pgfplots}
\usetikzlibrary{calc}
\usepackage{mathrsfs}
\usepackage{upgreek}
\usepackage{steinmetz}
\usepackage[outdir=./]{epstopdf}
\interdisplaylinepenalty=1000 
\usepackage{lipsum}
\usepackage[variablett]{lmodern}
\usepackage{amsbsy}
\usepackage{bm}
\usetikzlibrary{positioning}
\usetikzlibrary{shapes,arrows}
\usepackage{multirow}

\pgfplotsset{
	every tick label/.append style={scale=1},
	every axis/.append style={
	}
}

\pgfplotsset{
	grid style = {
		dash pattern = on 0.05mm off 1mm,
		line cap = round,
		black,
		line width = 0.5pt
	}
}
\usetikzlibrary{shapes.multipart,intersections}

\newcommand{\tx}{\textnormal}


\newsavebox\myboxA
\newsavebox\myboxB
\newlength\mylenA

\newcommand*\xoverline[2][0.75]{%
    \sbox{\myboxA}{$\m@th#2$}%
    \setbox\myboxB\null
    \ht\myboxB=\ht\myboxA%
    \dp\myboxB=\dp\myboxA%
    \wd\myboxB=#1\wd\myboxA
    \sbox\myboxB{$\m@th\overline{\copy\myboxB}$}
    \setlength\mylenA{\the\wd\myboxA}
    \addtolength\mylenA{-\the\wd\myboxB}%
    \ifdim\wd\myboxB<\wd\myboxA%
       \rlap{\hskip 0.5\mylenA\usebox\myboxB}{\usebox\myboxA}%
    \else
        \hskip -0.5\mylenA\rlap{\usebox\myboxA}{\hskip 0.5\mylenA\usebox\myboxB}%
    \fi}
\makeatother
\DeclareMathOperator{\argmin}{argmin}
\DeclareMathOperator{\argmax}{argmax}
\newtheorem{lemma}{Lemma}

\usepackage{subfigure} 
\usepackage{graphicx}

\begin{document}

\title{Computationally Efficient Algorithm for Eco-Driving over Long Look-Ahead Horizons}

\author{Ahad~Hamednia,
        Nalin~Kumar~Sharma,
        Nikolce~Murgovski,
        and~Jonas~Fredriksson 
\thanks{A. Hamednia, N. K. Sharma, N. Murgovski and J. Fredriksson are with the Department
of Electrical Engineering, Chalmers University of Technology,
Gothenburg 412 96, Sweden (e-mail: hamednia@chalmers.se).}}
\maketitle

\begin{abstract}
This paper presents a computationally efficient algorithm for eco-driving over long prediction horizons. The eco-driving problem is formulated as a bi-level program, where the bottom level is solved offline, pre-optimising gear as a function of longitudinal velocity and acceleration. The top level is solved online, optimising a nonlinear dynamic program with travel time, kinetic energy and acceleration as state variables. To further reduce computational effort, the travel time is adjoined to the objective by applying necessary Pontryagin's Maximum Principle conditions, and the nonlinear program is solved using real-time iteration sequential quadratic programming scheme in a model predictive control framework. Compared to standard cruise control, the energy savings of using the proposed algorithm is up to $15.71\,\%$.
\end{abstract}

\begin{IEEEkeywords}
Eco-driving, velocity optimisation, Pontryagin's maximum principle, real-time iteration, sequential quadratic programming, model predictive control.
\end{IEEEkeywords}

\IEEEpeerreviewmaketitle

\section{Introduction}\label{sec:intro}

\IEEEPARstart{E}{xcessive} energy consumption of vehicles is recently being regarded as a crucial concern for policy makers and customers due to economic, ecological and environmental issues. For instance, Organisation for Economic Co-operation and Development (OECD) forecasts a rapid growth in transport demand over the coming years, which may lead to 60\% increase in worldwide transport \ce{CO2} emissions by 2050, \cite{oecditf19}. One effective way to mitigate destructive consequences from ever growing energy consumption by vehicles is to improve the vehicular energy efficiency. It should be noted that increased efficiency is the biggest contributor to abatement of the \ce{CO2} emissions in the energy sector according to \cite{oecdiea18}.

Eco-driving has been concerned widely as an approach for energy-efficient manoeuvring of a vehicle by optimising velocity profile when considering road information and traffic flow \cite{kamal11,kamal12,vajedi15,luo15}. When driving in a hilly terrain, it is preferable to vary the vehicle speed over a narrow interval while keeping the maximum allowed travel time, i.e., speeding up when driving downhill and decreasing speed when climbing uphill, to have less energy waste at braking pads compared to a constant speed driving \cite{barkenbus2010eco}. Implementing this behaviour over complex road topographies is generally achieved by model-based control methods that maximise energy efficiency by optimally coordinate the energy use.

Dynamic programming (DP) \cite{bellman57} is the most commonly used algorithm to optimise the velocity profile of 
vehicles due to its potential to tackle non-convex, nonlinear and mixed-integer optimisation problems \cite{hellstrom09,hellstrom10,dib11,heppeler16a,wahl13,buhler13}. Fuel-optimal look-ahead control strategies have been proposed in \cite{hellstrom09} and \cite{hellstrom10} using DP, where in addition to optimising velocity, optimal gear shifting of conventional trucks is also investigated. Furthermore, a DP-based method is applied in \cite{dib11} to minimise the energy consumption in fully electric vehicles (EVs) by optimising vehicle speed on short-range trips, e.g. driving between two consecutive traffic lights. A combined energy management and eco-driving approach using discrete DP is devised in \cite{heppeler16a} for hybrid electric vehicles (HEVs) driving over limited horizons, where the velocity profile is allowed to be optimised to further enhance fuel efficiency. Despite the promising contributions in solving optimal control problems, DP-based methods suffer from the \textit{curse of dimensionality}, which denotes to a fact that computational time increases exponentially with the number of state variables and control signals, \cite{bellman57}. Several ways have been taken to decrease computational effort, for example by limiting the look-ahead horizon of cruise controllers for HEVs. At the current state, real-time capable DP-based control can only be applied for short prediction horizon scenarios of HEVs \cite{wahl13}. Other approaches focus on simplifying the powertrain model, by e.g. using a simplified internal combustion engine (ICE) model or discarding system states, such as travel time, ICE on/off and gear \cite{buhler13}. 

For high-dimensional optimisation problems, e.g. optimal control of HEVs with more energy states, several alternative approaches have been proposed. In \cite{themann15} a mixed-integer quadratic program (MIQP) \cite{boyd04} has been applied for power allocation of HEVs. A way to diminish computational complexity of the high-dimensional problems is adjoining system dynamics to the cost function and neglecting constraints on state variables, as shown in  \cite{hellstrom10b,keulen10,keulen11}. In \cite{keulen11} Pontryagin's Maximum Principle (PMP) \cite{pontryagin62} has been applied to optimise vehicle speed, gear selection and energy use of HEVs, where integer state variables have been neglected. Furthermore, in \cite{held18} minimisation of energy consumption using PMP and considering varying speed requirements has been studied. Although PMP-based methods are computationally efficient for optimal velocity problems over long look-ahead horizons, they do not provide the same computational advantage for problems where state variables often activate their bounds. This is especially relevant for single shooting methods used for solving two-point boundary value problems (2PBVPs), as in e.g. \cite{Keulen14}. 

Another portion of the conducted research benefits from the combination of DP and other methods. Such approaches have been proposed by \cite{murgovski16,johannesson15a,johannesson15b,hovgard18}, where real-valued decisions, e.g., planing optimal velocity, are made by sequential convex optimisation, while integer decisions are taken by DP. These strategies have also been shown to be effective when considering surrounding traffic \cite{johannesson15b}, or cooperative energy management of multiple vehicles \cite{murgovski16,hovgard18}. In \cite{uebel17} a PMP-DP method has been proposed to solve the optimal control of vehicle speed, battery energy, gear selection and ICE on/off state. However, the computational effort of the control algorithms is still highly susceptible to long horizon lengths and high update frequencies.

The synergy among different optimisation methods is generally performed by splitting the problem into sub-problems arranged into multi-level or bi-level control architectures, where different tasks are delegated to distinct layers based on horizon length, time constants, sampling interval and updating frequency. To this end, multi-level and bi-level model predictive control (MPC) algorithms have been proposed for conventional vehicles (CVs), \cite{guo18}, and HEVs, \cite{turri16,guo16,stroe17,uebel19}, respectively. The multi-level architectures allow solving computationally intensive sub-problems, e.g. mixed-integer programs. Such programs are typically solved by an MPC, tracking a certain reference or a target state, typically over look-ahead horizons of up to \SI{20}{km}. Even though such horizons may appear long, there are problems that are naturally defined for even longer horizons. 

Problems with very long look-ahead horizons, in the order of hundreds of kilometres, are typically addressed in logistics \cite{hanson17}. As an example of a target state over long horizon is the travel time, which is often given at the end of the route. In the case of electrified vehicles, a target battery state of charge may also be provided at charging locations along the route. 
Within the multi-level control architecture mentioned earlier, these problems are delegated to the highest supervisory level, generating reference travel time and battery state of charge trajectories over hundreds of kilometres. Early results on developing online implementable controllers that operate over long horizons, hereafter referred to as the mission managers, have been published in our previous work for the case of CV, see \cite{hamednia18}. 

The goal of this paper is to generalise the mission manager developed in \cite{hamednia18} to both CVs and EVs. The purpose of the mission manager is to generate optimal reference trajectories for the entire route, or for look-ahead horizons that may stretch over hundreds of kilometres. 
The computational effort are decreased in three steps: 1) a problem decomposition into two sub-problems, where velocity and travel-time trajectory are optimised online and gear shifting strategy is optimised offline; 2) a combination of an indirect PMP solution and a direct nonlinear programming for reducing the number of states in the online optimisation sub problem; 3) a real-time iteration (RTI) sequential quadratic programming (SQP) \cite{diehl01}, which allows a single quadratic program (QP) to be solved in
an MPC manner \cite{thomas94}. 

The outline of the paper is as follows. In Section \ref{sec:model}, dynamic model of vehicle is presented. In Section \ref{sec:problem_formulation}, the energy minimisation problem is formulated. Section \ref{sec:cea} describes the computationally efficient algorithm. In Section \ref{sec:cases} the proposed algorithm is applied to a CV and an EV. In Section \ref{sec:res}, the simulation results are demonstrated. Finally, Section \ref{sec:conclusion} concludes the paper. 

\section{Physical Modelling}\label{sec:model}

This section addresses vehicle dynamics, i.e. travel time and longitudinal vehicle dynamics. Furthermore, static relations are given that translate torque and rotational speed of actuator to traction force and longitudinal velocity. Finally, lower bounds and upper bounds on longitudinal velocity, traction force and acceleration are presented.


\subsection{Travel time and longitudinal dynamics}\label{subsec:time_dyn}

According to Newton's law of motion, preliminary governing equations of a point mass vehicle model are \begin{align}
    &\dot s(t)=v(t) \label{eq:time_dyn_t}\\
    &\tx{m}\,\dot v(t)=F(t)+F_\tx{brk}(t)-F_\tx{air}(v)-F_\alpha(s)\label{eq:long_dyn_t}
\end{align} 
where $\tx{m}$ is total lumped mass of the vehicle, $t$ is travel time, $s$ is travelled distance, $v$ is longitudinal velocity, $F$ is non-negative traction force at the wheel side of the vehicle generated by the actuator, and $F_\tx{brk}$ is a non-positive force that includes braking by the service brakes, a retarder, a compression release engine brake and/or an exhaust pressure governor. For the case of a conventional vehicle, more details on the braking force will be discussed later, in Section~\ref{subsec:on-cv}. Note that the travelled distance and longitudinal velocity are functions of travel time in \eqref{eq:long_dyn_t}. However, the explicit dependence is not shown for brevity, when these signals are input arguments to functions, such as $F_\alpha(s(t))$ and $F_\tx{air}(v(t))$. The nominal aerodynamic drag, $F_\tx{air}$, and resistive forces that depend on road gradient $\alpha$, $F_\alpha(s)$, are defined as
\begin{align}
    &F_\tx{air}(v)=\frac{\uprho_\tx{a}\tx{c}_\tx{d}\tx{A}_\tx{f}v^2}{2}, \label{eq:aero}\\
    &F_\alpha(s)=\tx{mg}\left(\sin(\alpha(s))+\tx{c}_\tx{r}\cos(\alpha(s))\right), \label{eq:rol_pot}
\end{align}
where $\uprho_\tx{a}$ is air density, $\tx{c}_\tx{d}$ is aerodynamic drag coefficient, $\tx{A}_\tx{f}$ is vehicle frontal area, g is the gravitational acceleration, and $\tx{c}_\tx{r}$ is rolling resistance coefficient. 

The vehicle longitudinal dynamics \eqref{eq:time_dyn_t} and \eqref{eq:long_dyn_t}, are nonlinear due to the quadratic dependency of longitudinal velocity in the aerodynamical drag function in \eqref{eq:aero} and the road gradient that can be an arbitrary nonlinear function of distance in \eqref{eq:rol_pot}. The nonlinearity may increase computational complexity. To overcome this issue, it is possible to modify the equations in \eqref{eq:time_dyn_t} and \eqref{eq:long_dyn_t} by changing independent variable and change state variables. If distance $s$ is used as independent variable instead of time $t$ in \eqref{eq:time_dyn_t}, i.e. decisions are planned with respect to $s$. 
This issue can be overcome by changing the independent variable from time to distance, i.e. decisions are planned with respect to $s$ instead of $t$, as presented in \cite{murgovski13china,lipp14,murgovski15ACC,castro16}. Subsequently, for a given road topography, the function $F_\alpha$ now becomes a fixed trajectory for the entire route. In addition, the nonlinearity in \eqref{eq:aero} can be removed by a change of state variable $v$ to kinetic energy, 
\begin{align}
    E(s)=\frac{\tx{m}v^2(s)}{2} \label{eq:v2E}
\end{align}
where $E$ represents the kinetic energy of the vehicle. These transformations are non-approximate as long as the studied vehicle does not stop or change direction of its movement. 
Also, to study variations on acceleration and jerk of the driving vehicle, we introduce acceleration, $a$, as an additional state variable. The change of acceleration in space coordinates, which resembles jerk, $j$, now becomes the input signal to the vehicle system. The resulting vehicle dynamics model becomes
\begin{align}
    &t'(s)=\sqrt{\frac{\tx{m}}{2E(s)}}\label{eq:time_dyn_s}\\
    &E'(s)=\tx{m}a(s) \label{eq:a_dyn_s}\\
    &a'(s)=j(s) \label{eq:j_dyn_s}
\end{align}
where $t'$ and $a'$ are used as short hand notations for $\tx{d}t/\tx{d}s$ and $\tx{d}a/\tx{d}s$, respectively. The $E'=\nobreak \tx{m}vv'$ is the product of mass and vehicle acceleration, and
\begin{align}
    & a(s)=\frac{1}{\tx{m}}\left(-\tx{c}_\tx{a}E(s)+F(s)+F_\tx{brk}(s)-F_\alpha(s)\right)
    \label{eq:F2a}
\end{align}
where $\tx{c}_\tx{a}=\uprho_\tx{a}\tx{c}_\tx{d}\tx{A}_\tx{f}/2$ gathers the drag related coefficients.

It can be noticed that \eqref{eq:time_dyn_s} is still nonlinear with respect to $E$. 
More information on how to tackle the nonlinearity in \eqref{eq:time_dyn_s} is presented in Section \ref{sec:cea}.

Throughout this paper, all constants, which are not dependent on $s$ are shown in upright letters, e.g. $\tx{m},\tx{A}_\tx{f},\tx{c}_\tx{d},\uprho_\tx{a}$  do not depend on $s$. However, all the states and control inputs are trajectories in terms of $s$, e.g. $t(s)$ and $E(s)$ are trajectories dependent on $s$, where in several places the dependency is not displayed for simplicity.

\begin{figure}
 \centering
 \includegraphics[width=0.95\linewidth]{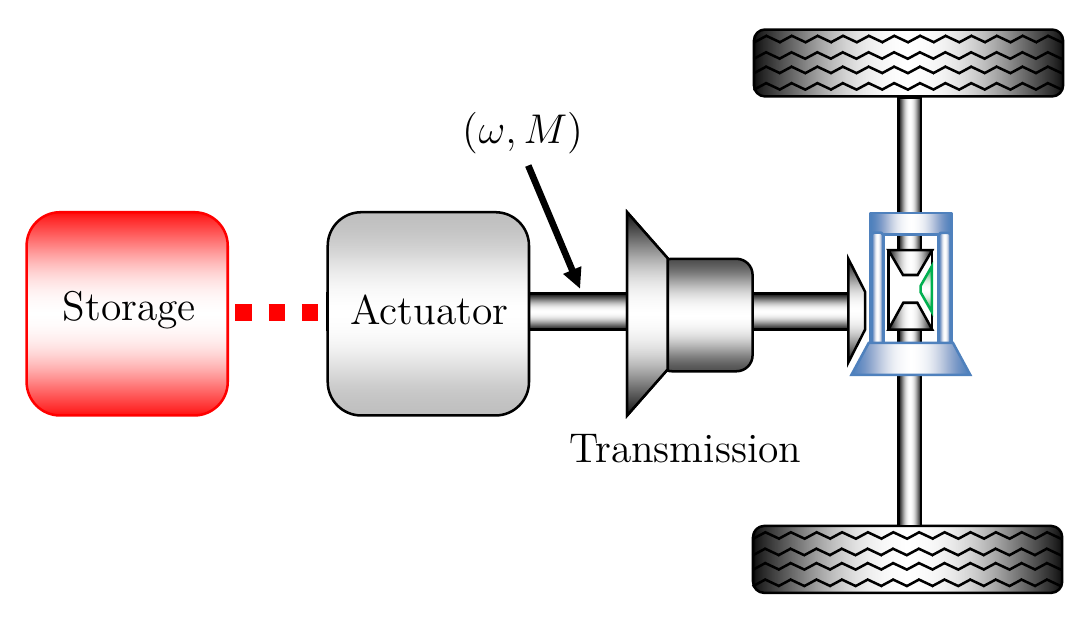}\vspace{-.3cm}
   \caption{\footnotesize Schematic diagram of the studied powertrain. The powertrain consists of energy storage unit, actuator and transmission system, which transfers shaft torque, $M$, with rotating speed $\omega$.}
   \label{fig:schematic_gen}
\end{figure}

\subsection{Vehicle powertrain}\label{subsec:powertrain}
A schematic diagram of the considered powertrain is illustrated in Fig. \ref{fig:schematic_gen}. The powertrain consists of an energy storage unit, an actuator, e.g. an ICE or an electric machine (EM), and a transmission system. The torque and speed at the shaft between the actuator and transmission is denoted by $M$ and $\omega$, respectively.

The transmission system is modelled considering the transmission and final gear ratios as 
\begin{align}
    v(s)=\omega(s) R(\gamma), \quad F(s)=\frac{M(s)}{R(\gamma)},
    \label{eq:trn}
\end{align}
where $\gamma$ denotes selected gear, and 
\begin{align}
    R(\gamma)=\frac{\tx{r}_\tx{w}}{\tx{r}_{\tx{tg}}(\gamma)\tx{r}_{\tx{fg}}}
    \label{eq:ratio}
\end{align}
where $\tx{r}_\tx{w}$ is the wheel radius, $\tx{r}_\tx{tg}$ and $\tx{r}_{\tx{fg}}$ are transmission and final gear ratios, respectively.



The speed limits 
\begin{align}
    &v_\tx{min}(s)=\max\left\{v^{\tx{road}}_{\tx{min}}(s), v^{\tx{traffic}}_{\tx{min}}(s)\right\},\\
    &v_\tx{max}(s)=\min\left\{v^{\tx{road}}_{\tx{max}}(s), v^{\tx{traffic}}_{\tx{max}}(s)\right\},
\end{align}
are obtained by considering legal speed limits of the road, $v^{\tx{road}}_{\tx{min}}(s)$ and $v^{\tx{road}}_{\tx{max}}(s)$ and dynamic traffic speed limits, $v^{\tx{traffic}}_{\tx{min}}(s)$ and $v^{\tx{traffic}}_{\tx{max}}(s)$.


The traction force limits as functions of kinetic energy are
\[ F(s) \in [F_{\gamma\tx{min}}(E), F_{\gamma\tx{max}}(E)], \] 
where 
\begin{align}
    & F_{\gamma\tx{min}}(E)=\min_\gamma F_\tx{min}(E,\gamma),\\
    & F_{\gamma\tx{max}}(E)=\max_\gamma F_\tx{max}(E,\gamma).
\end{align}
The functions $F_\tx{min}(E,\gamma)$ and $F_\tx{max}(E,\gamma)$ are the traction force limits for a given pair of kinetic energy (longitudinal velocity) and gear.

In turn, the acceleration limits, 
\[ a(s)\in[a_\tx{min}(E),a_\tx{max}(E)], \]
can be derived using \eqref{eq:F2a} as a function of kinetic energy (longitudinal velocity) and considering the limits on 
traction force, as
\begin{align}
    &a_\tx{min}(E)=\max\left\{\underline{\tx{a}}, \frac{F_{\gamma\tx{min}}(E) -\tx{c}_\tx{a}E+\underline{\tx{F}_\tx{brk}}-F_\alpha}{\tx{m}} \right \} \label{eq:a_min}\\
    &a_\tx{max}(E)=\min\left\{\overline{\tx{a}},\frac{F_{\gamma\tx{max}}(E) -\tx{c}_\tx{a}E-F_\alpha}{\tx{m}} \right\} \label{eq:a_max}
\end{align}
where $\underline{\tx{a}}$ is the minimum and $\overline{\tx{a}}$ is the maximum allowed acceleration within a comfort zone and $\underline{\tx{F}_\tx{brk}}$ denotes constant minimum total braking force. Here, $a_\tx{min}$ and $a_\tx{max}$ are not necessarily smooth functions, as
$F_{\gamma\tx{min}}$ and $F_{\gamma\tx{max}}$ may not be smooth functions. This will be discussed in more details in Section~\ref{sec:cases}.

In order to deliver a certain traction force, the actuator draws power from the energy storage unit. Let $P_\tx{w}(v,F,\gamma)$ denote the drawn power, which in the case of a combustion engine is a chemical, fossil fuel power, and in the case of an electric machine, it is an electric power. Explicit representations of the internal power in terms of the kinetic energy (longitudinal velocity) and traction force will be provided later, in Section \ref{sec:cases}.

\section{PROBLEM FORMULATION}\label{sec:problem_formulation}

This section formulates an optimisation problem, which aims at planning optimal velocity trajectory for the entire route, in a way that total energy consumption is minimised and the travel time is upper bounded. A performance function is formulated as
\begin{align}
    &\int_{0}^{\tx{s}_\tx{f}} \left(\frac{\tx{c}_\tx{eg}P_\tx{w}(v,F,\gamma)}{v(s)} + \tx{w}_1 a^2(s)+\tx{w}_2 j^2(s) \right)\tx{d}s
    \label{eq:p1_cost_v}
\end{align}
that minimises total energy consumption by integrating the internal power drawn from the storage unit and the driver's discomfort via the acceleration and jerk, where $\tx{c}_\tx{eg}$ is the price of energy storage, and $\tx{w}_1$ and $\tx{w}_2$ are penalty factors. The division of the internal power with speed in \eqref{eq:p1_cost_v} derives from the time to space transformation,
\[\int P_\tx{w}(v, F,\gamma) \tx{d}t=\int P_\tx{w}(v, F,\gamma)/v(s) \,\tx{d}s. \]

\subsection{Computing upper bound on travel time}\label{subsec:v_est}
To compute the upper bound on travel time, $\tx{t}_\tx{f}$, it is possible to obtain a velocity profile, $v_{\tx{hg}}(s)\in[v_{\tx{min}}(s), v_{\tx{max}}(s)]$, as a heuristic guess by filtering cruising speed, $\tx{v}_{\tx{cru}}\in[v^{\tx{road}}_{\tx{min}}(s), v^{\tx{road}}_{\tx{max}}(s)]$. The rated power of the actuator and road/traffic limits on speed are taken into consideration in the cruise speed filtering \cite{murgovski16,johannesson15a,hamednia18}. The cruising speed is assumed to be set manually by the driver or automatically by a telemetry system. 
When deriving $v_{\tx{hg}}$, it is assumed that the vehicle will try to maintain $\tx{v}_{\tx{cru}}$ unless actuator limit is reached.
\begin{equation}
\begin{split} 
    v_{\tx{hg}}(s)&=\min\Bigg\{\tx{v}_{\tx{cru}}, \int_{0}^s
    \frac{a_\tx{max}(v_{\tx{hg}}) }{v_{\tx{hg}}(\sigma)} \tx{d}\sigma \Bigg\}
\end{split} \label{eq:Vest}
\end{equation}
By computing maximum arrival time as
\begin{align} \label{eq:final_travel_time}
    \tx{t}_\tx{f}=\int_{0}^{\tx{s}_\tx{f}} \frac{\tx{d}s}{v_{\tx{hg}}(s)},
\end{align}
where $\tx{s}_\tx{f}$ is the final position at the end of the route, a constraint can be imposed 
\begin{align}
    t(\tx{s}_\tx{f}) \leq \tx{t}_\tx{f} \label{eq:final_time}
\end{align}
that requires finishing the route in the same time or sooner than what would be required when driving with $v_{\tx{hg}}$.

\subsection{MPC for minimizing energy consumption}

The problem \eqref{eq:p1_cost_v} is optimised in an MPC framework with a prediction horizon of length $s_\tx{H}$. The goal of this paper is to develop a computationally efficient algorithm that allows horizons that cover the entire route. However, as computational resources are always limited, we impose an upper bound, $\tx{s}_\tx{Hmax}$, hopefully in the range of hundreds of kilometres. The optimisation problem can then be solved in a moving horizon MPC (MHMPC) framework if ${\tx{s}_\tx{Hmax} < \tx{s}_\tx{f}}$, or in a shrinking horizon MPC (SHMPC) framework if ${\tx{s}_\tx{Hmax} \geq \tx{s}_\tx{f}}$.
The optimisation variables are predicted at distance samples ${s\in[\zeta, \zeta + s_\tx{H}]}$, given information of the actual vehicle's states at $\zeta$. Thus, the actual horizon length can be computed as
\begin{align}
    s_\tx{H}(\zeta)=\min\{\tx{s}_\tx{Hmax},\tx{s}_\tx{f}-\zeta\}.
\end{align}

The problem can now be summarised as follows
{\allowdisplaybreaks
\begin{subequations} \label{eq:p1}
\begin{align}
\begin{split}
&\min_{j,F_\tx{brk},\gamma} \int_{\zeta}^{\zeta+s_\tx{H}(\zeta)} \Bigg(\frac{\tx{c}_\tx{eg}P_\tx{w}(E,F,\gamma)}{\sqrt{\frac{2E(s|\zeta)}{\tx{m}}}}\\
&\hspace{30mm}+ \tx{w}_1 a^2(s|\zeta)+\tx{w}_2 j^2(s|\zeta) \Bigg)\tx{d}s,
\end{split} \label{eq:p1_cost_E}\\
& \text{subject to:} \nonumber\\
&t'(s|\zeta)=\sqrt{\frac{\tx{m}}{2E(s|\zeta)}} \label{eq:p1_time}\\
&E'(s|\zeta)=\tx{m}a(s|\zeta) \label{eq:p1_E}\\
&a'(s|\zeta)=j(s|\zeta) \label{eq:p1_a}\\
&F(s|\zeta)=\tx{m}a(s|\zeta)+\tx{c}_\tx{a}E(s|\zeta)-F_\tx{brk}(s|\zeta)+F_\alpha(s)\\
& E(s|\zeta)\in \frac{\tx{m}}{2} [v_\tx{min}^2(s|\zeta), v_\tx{max}^2(s|\zeta)] \label{eq:p1_Ebound}\\
& a(s|\zeta)\in [a_\tx{min}(E),a_\tx{max}(E)]\label{eq:p1_abound} \\
& j(s|\zeta)\in [\underline{\tx{j}}, \overline{\tx{j}}]\label{eq:p1_jbound}\\
& F_\tx{brk}(s|\zeta) \in [\underline{\tx{F}_\tx{brk}},0]\label{eq:p1_Fbrkbound} \\
& t(\zeta|\zeta)=t_\tx{0}(\zeta), \quad E(\zeta|\zeta)=E_\tx{0}(\zeta), \quad a(\zeta|\zeta)=a_\tx{0}(\zeta) \label{eq:p1_int}\\
& t(\zeta+s_\tx{H}|\zeta) \leq t_\tx{H}(\zeta) \label{eq:p1_tf}\\
& \gamma(s|\zeta) \in \{1, 2,\dots, \upgamma_\tx{max} \} \label{eq:p1_gear_cstr}
\end{align}
\end{subequations}}%
where $\underline{\tx{j}}$ is the minimum and $\overline{\tx{j}}$ is the maximum allowed jerk within a comfort zone, $t_\tx{0}$, $E_\tx{0}$ and $a_\tx{0}$ are the values of the system states at instant $\zeta$, and $\upgamma_{\tx{max}}$ is the highest gear. 
The constraints \eqref{eq:p1_time}-\eqref{eq:p1_gear_cstr} are enforced for all ${s\in[\zeta,\zeta+s_\tx{H}(\zeta)]}$ and the problem is re-evaluated for all ${\zeta \in [0, \tx{s}_\tx{f}]}$. The maximum allowed travel time over the prediction horizon, $t_\tx{H}$, is computed as in \eqref{eq:final_travel_time} for the distance $s_\tx{H}$. The problem \eqref{eq:p1} is a non-convex, mixed-integer and dynamic nonlinear program, where $t$, $E$ and $a$ are real-valued state variables, $j$ and $F_{\tx{brk}}$ are real-valued control inputs, $\gamma$ is an integer control input and $F$ is an output variable. Although from a  control point of view $j$ is the control signal, in practice, the acceleration $a$ is applied to the vehicle.

For the sake of simplicity, the dependence on $\zeta$ will not be shown in most following parts of the paper and the method is explained via a single MPC update, e.g. the one with $\zeta=0$.

\section{Computationally Efficient Algorithm}\label{sec:cea}
This section proposes reformulation  steps of the problem \eqref{eq:p1} to enhance the computational efficiency. These steps are: 1) bi-level optimisation program that allows decoupling the integer variable, i.e. gear, from a nonlinear optimisation program (NLP); 2) adjoining nonlinear dynamics of travel time to the objective using necessary PMP conditions for optimality; 3) Removing a loop on finding optimal time costate and applying RTI SQP scheme.
\subsection{Bi-level programming and gear optimisation}\label{sec:bi-level}

The mixed-integer problem \eqref{eq:p1} can be reformulated as a bi-level program: 
{\allowdisplaybreaks
\begin{subequations} \label{eq:p2}
\begin{align}
&\min_{j,F_\tx{brk}} \int_{0}^{s_\tx{H}} \Bigg(\frac{\tx{c}_\tx{eg}P_\tx{w}(E, F,\gamma^*)}{\sqrt{\frac{2E(s)}{\tx{m}}}} + \tx{w}_1 a^2(s) + \tx{w}_2 j^2(s)\Bigg)\tx{d}s \label{eq:p2_cost}\\
&  \text{subject to:\eqref{eq:p1_time}-\eqref{eq:p1_tf}} \nonumber\\
& \gamma^*(s) = \argmin_{\gamma} P_\tx{w}(E,F,\gamma)\\
& \text{subject to: } \gamma(s) \in \{1, 2, ..., \upgamma_\tx{max} \}\\
& \hspace{18mm} F(s)\in [F_{\gamma\tx{min}}(E),F_{\gamma\tx{max}}(E)]\label{eq:p2_Fbound}
\end{align}
\end{subequations}}%
where optimisation of gear resides only in the bottom level program, while all the system dynamics reside in the top level program. If the actuator and transmission system are modelled statically, it is possible to separate the bottom level and solve offline, if $E$ and $F$ are regarded as parameters and optimal gear is computed as a function of the parameters. To this end, the bottom level can be solved as
{\allowdisplaybreaks
\begin{subequations} \label{eq:opt_gear}
\begin{align}
& f^*_{\gamma}(E,F) = \argmin_\gamma P_\tx{w}(E,F,\gamma)\\
& \text{subject to: } \gamma \in \{1, 2, ..., \upgamma_\tx{max} \}\\
& \hspace{17mm} F \in \mathcal{F}(E)=[F_{\gamma\tx{min}}(E), F_{\gamma\tx{max}}(E) ]\\
& \hspace{17mm} E \in \mathcal{E}(\gamma)=\frac{\tx{m}[\upomega_\tx{idle}^2, \upomega_\tx{max}^2 ] \tx{R}^2(\gamma)}{2}
\end{align}
\end{subequations}}%
where $f^*_{\gamma}(E,F)$ is a two-dimensional function describing the optimal gear choices for all traction force versus speed (kinetic energy) combinations, $\mathcal{E}$ and $\mathcal{F}$ are the feasible sets for kinetic
energy and traction force respectively, and $\upomega_\tx{idle}$ and $\upomega_\tx{max}$ are rotational speed limits. 
By replacing the optimal gear with the parametric function, the internal power can be written as 
\begin{align}
    \label{eq:opt_p}
    &P_{\gamma}(E,F) = P_\tx{w}(E,F,f^*_{\gamma}(E,F)),
\end{align}
indicating power consumption when gear is optimally chosen. Note that for CV case study the offline-optimised gear selection algorithm is extended, which covers the negative force area originating from negative additional force. More details will be given later in Section. \ref{sec:cases}.

\subsection{Necessary PMP conditions for optimality} \label{subsec:non-app-pmp}

In the second step of the algorithm, the problem \eqref{eq:p2} is reformulated, which is facilitated by the necessary PMP conditions for optimality. The Hamiltonian is defined as
\begin{align}
\begin{split}
    \mathcal{H}(\cdot) &= \tx{c}_\tx{eg}P_{\gamma}(E,F) \sqrt{ \frac{\tx{m}}{2E(s)} } + \tx{w}_1 a^2(s)+\tx{w}_2 j^2(s) +\\ &\hspace{4mm}+\lambda_\tx{t}(s) \sqrt{\frac{\tx{m}}{2E(s)}} + \lambda_E(s) \tx{m}a(s)+\lambda_a(s) j(s).
\end{split}
\end{align}
where the symbol $\cdot$ is a  compact notation for a function of multiple variables. Here, $\lambda_\tx{t}$, ${\lambda}_E$ and $\lambda_a$ denote the costates of travel time, kinetic energy and acceleration, respectively. It can been observed that the Hamiltonian is not an explicit function of travel time, thus the optimal time costate, $\lambda^*_t$, i.e. the value for $\lambda_\tx{t}$ that satisfies the maximum travel time constraint \eqref{eq:p1_tf}, is a constant value. Hence
\begin{align}
\label{eq:nec_cond}
    & {\lambda'}_t^*(s) = -\left( \frac{\partial \mathcal{H}(\cdot)}{\partial t}\right)^*=0.
\end{align}
Furthermore, the travel time is a strictly monotonically increasing function that may activate constraint \eqref{eq:p1_tf} only at the final instant. Consequently, if $\lambda_\tx{t}^*$ is known, it will be possible to remove the nonlinear constraint on travel time \eqref{eq:p2} and adjoin the product of $\lambda^*_t(s)$ and the nonlinear function $\sqrt{\frac{\tx{m}}{2E(s)}}$ to the objective function. This implies that the dynamic optimal control problem can yet again be formulated as a bi-level program
{\allowdisplaybreaks
\begin{subequations} \label{eq:p3}
\begin{align}
\begin{split} 
&\min_{\lambda_\tx{t}} \int_{0}^{s_\tx{H}} \Bigg( \frac{\tx{c}_\tx{eg}P_{\gamma}(E^*(\lambda_\tx{t},s),F^*(\lambda_\tx{t},s))+\lambda_\tx{t}}{\sqrt{\frac{2E^*(\lambda_\tx{t},s)}{\tx{m}}}}\\
&\hspace{25mm}+\tx{w}_1 {a^*}^2(\lambda_\tx{t},s)+\tx{w}_2 {j^*}^2(\lambda_\tx{t},s) \Bigg)\tx{d}s
\end{split} \label{eq:p3_top_cost}\\
&\text{subject to:} \nonumber\\
&{t'}^*(\lambda_\tx{t},s)=\sqrt{\frac{\tx{m}}{2E^*(\lambda_\tx{t},s)}}\label{eq:p3_top_t}\\
&{E'}^*(\lambda_\tx{t},s)=\tx{m} a^*(\lambda_\tx{t},s)\label{eq:p3_top_E}\\
&{a'}^*(\lambda_\tx{t},s)=j^*(\lambda_\tx{t},s)\label{eq:p3_top_a}\\
& t^*(\lambda_\tx{t},0)=t_\tx{0}, \quad t^*(\lambda_\tx{t},s_\tx{H}) \leq t_\tx{H}\label{eq:p3_top_t0H}\\
\begin{split}
&[j^*(\lambda_\tx{t},s), F_\tx{brk}^*(\lambda_\tx{t},s),F^*(\lambda_\tx{t},s)]=\argmin_{j,F_\tx{brk}}
\\ 
&\hspace{5mm}\int_{0}^{s_\tx{H}} \Bigg(\frac{\tx{c}_\tx{eg}P_{\gamma}(E,F)+\lambda_\tx{t}}{\sqrt{\frac{2E(s)}{\tx{m}}}}+\tx{w}_1 a^2(s)+\tx{w}_2 j^2(s) \Bigg)\tx{d}s 
\end{split} \label{eq:p3_bottom}\\
&\text{subject to: \eqref{eq:p1_E}-\eqref{eq:p1_Fbrkbound},$\quad E(0)=E_0, \quad a(0)=a_0$} \nonumber
\end{align}%
\end{subequations}}%
where all constraints involving travel time have been moved to the top level, while the bottom level, \eqref{eq:p3_bottom}, generates optimal control trajectories parameterised in $\lambda_\tx{t}$. 
Similarly as before, the goal is to separate the two optimisation levels. One way to do this is by trying different values for $\lambda_\tx{t}$ and then using search methods, e.g. Newton or bisection, to find $\lambda_\tx{t}^*$ that minimises the top level's cost.

By assuming that problem \eqref{eq:p3_bottom} is an NLP that can be solved with sequential quadratic programming (SQP), the procedure for solving the mixed-integer problem \eqref{eq:p1} will consist of three nested loops as illustrated in Fig.~\ref{fig:flwchrt}a. The outermost loop updates the MPC horizon, the middle loop finds the optimal value for $\lambda_\tx{t}$ and the innermost loop sequentially solves a QP in order to find the solution of problem \eqref{eq:p3_bottom} for a given value of $\lambda_\tx{t}$. The procedure is still computationally inefficient, as it requires solving multiple QPs for given multiple $\lambda_\tx{t}$ values in each MPC update. Our goal is to eliminate the inner most loops and for a given $\lambda_\tx{t}$, solve only a single QP in each MPC update, as illustrated in Fig~\ref{fig:flwchrt}b.
\begin{figure}
    \centering
    \includegraphics[width=\linewidth]{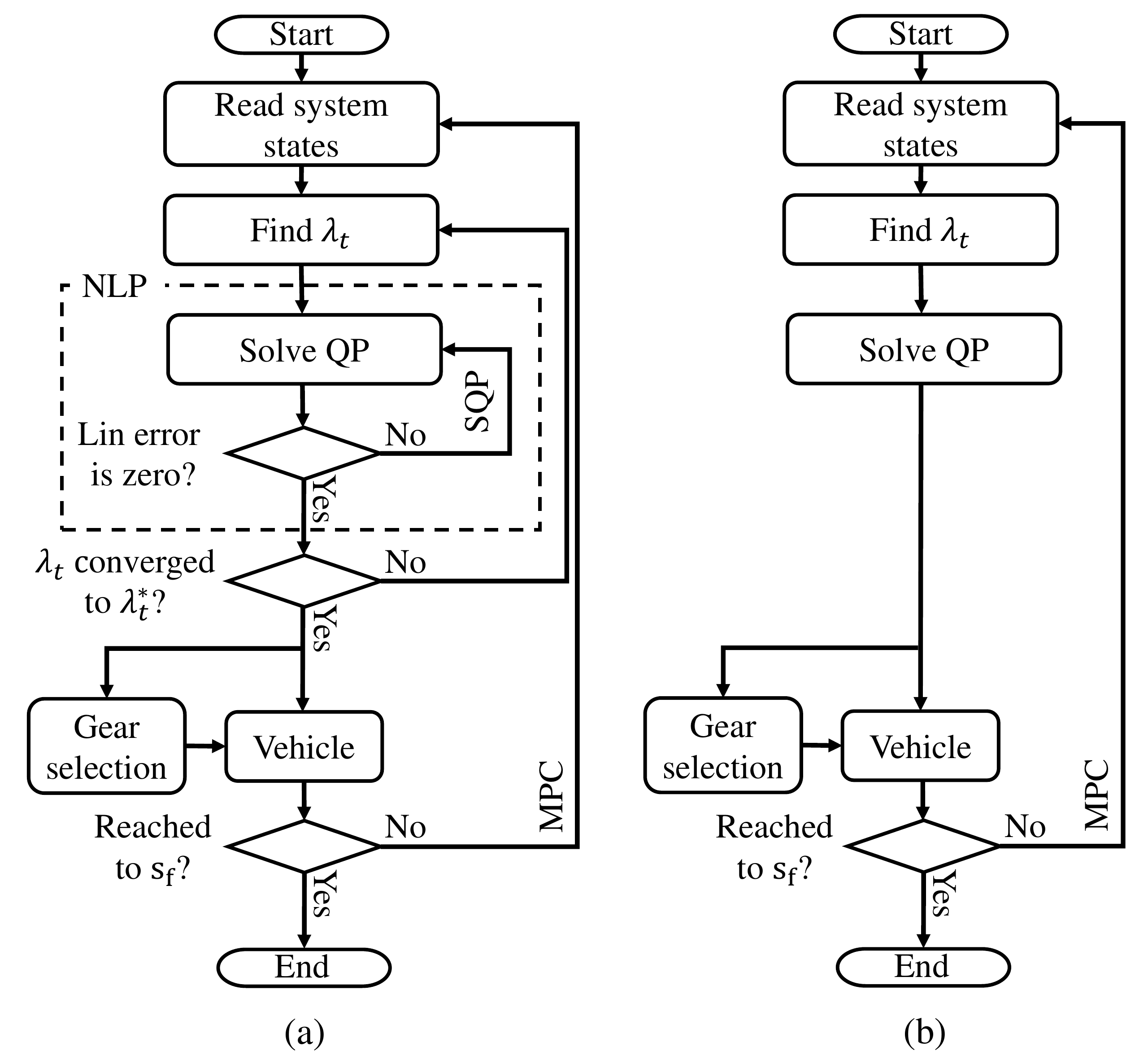}\vspace{-0.4cm}
    \caption{Flowchart of the proposed algorithm to solve NLP in MPC framework, (a) with three nested loops, innermost loop to solve NLP using SQP and middle loop to find $\lambda^*$; (b) using RTI SQP, which solves a single QP in each MPC update.}
    \label{fig:flwchrt}
\end{figure}

\subsection{Updating the time costate over the MPC loop}
To eliminate the loop on finding $\lambda^*_t$, it is considered that the optimal energy consumption corresponds in general to driving slow, so it can be assumed that the vehicle will use the entire travel time, i.e. ${t^*(\lambda_\tx{t},s_\tx{H})\approx t_\tx{H}}$. Hence, the objective of the top level program in \eqref{eq:p3} is transformed to minimising maximum travel time difference, as
\begin{align}
    \min_{\lambda_\tx{t}} ||t^*(\lambda_\tx{t},s_\tx{H}|\zeta) - t_\tx{H}(\zeta)|| 
    \label{eq:J_lambda}
\end{align}
where $||\cdot||$ may indicate any norm.

For the case that the problem \eqref{eq:p3} is solved in SHMPC framework, the final time instant and the final point of the horizon are fixed regardless of the update instant $\zeta$, i.e. $t_\tx{H}(\zeta)=\tx{t}_\tx{f}$ and $\zeta+s_\tx{H}(\zeta)=\tx{s}_\tx{f}$, $\forall \zeta$. 
 
\begin{lemma} \label{lemma_tstar}
If predicted disturbances do not change and there is no miss-match between the control and plant model, then for an SHMPC implementation of problem~\eqref{eq:p3} and for a given $\lambda_\tx{t}$, it holds,
\begin{align}
    t^*(\lambda_\tx{t},s_\tx{H}|\zeta)=t^*(\lambda_\tx{t},s_\tx{H}|\zeta+\delta\zeta), \quad \forall \delta\zeta \in [0,\tx{s}_\tx{f}-\zeta],
\end{align}
i.e. the optimal travel time at the end of the horizon does not change for different SHMPC updates. 
\label{lem:1}
\end{lemma}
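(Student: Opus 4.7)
The plan is to establish the result via Bellman's principle of optimality applied to the bottom-level program \eqref{eq:p3_bottom} for a fixed $\lambda_\tx{t}$, exploiting two features of the SHMPC setup: the terminal position of every horizon is the route endpoint $\tx{s}_\tx{f}$, and the running cost, dynamics, and exogenous data ($F_\alpha$, $v_\tx{min}$, $v_\tx{max}$) are unchanged between updates by hypothesis.

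First I would recast \eqref{eq:p3_bottom} at update $\zeta$ as a free-terminal-state optimal control problem on $[\zeta,\tx{s}_\tx{f}]$ in the states $(t,E,a)$ and controls $(j,F_\tx{brk})$. Using \eqref{eq:p1_time}, the term $\lambda_\tx{t}\sqrt{\tx{m}/(2E)}$ in the integrand is a total derivative and integrates to $\lambda_\tx{t}(t(\tx{s}_\tx{f})-t_\tx{0}(\zeta))$, so the cost splits cleanly into a path-dependent part and a linear penalty on the terminal travel time. Let $(j^*,F_\tx{brk}^*)(s|\zeta)$ denote an optimal input and $(t^*,E^*,a^*)(s|\zeta)$ the resulting optimal state trajectories.

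Next I invoke the principle of optimality: for any $\delta\zeta\in[0,\tx{s}_\tx{f}-\zeta]$, the tail of this trajectory on $[\zeta+\delta\zeta,\tx{s}_\tx{f}]$ is optimal for the sub-problem that starts from the intermediate state $(t^*,E^*,a^*)(\zeta+\delta\zeta|\zeta)$, terminates at $\tx{s}_\tx{f}$, and shares the same $\lambda_\tx{t}$, dynamics, running cost and exogenous data. Under the no-mismatch and invariant-prediction hypothesis, the plant state measured by the MPC at $\zeta+\delta\zeta$ coincides with $(t^*,E^*,a^*)(\zeta+\delta\zeta|\zeta)$, so the data defining the MPC problem at $\zeta+\delta\zeta$ are exactly those of this tail sub-problem. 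Therefore the MPC problem at $\zeta+\delta\zeta$ admits the tail of the original optimum as a minimizer, whose value at $\tx{s}_\tx{f}$ is $t^*(\lambda_\tx{t},s_\tx{H}|\zeta)$, yielding the claimed equality.

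The main obstacle is the possible non-uniqueness of the bottom-level minimizer: another minimizer of the sub-problem at $\zeta+\delta\zeta$ could attain the same optimal cost but reach $\tx{s}_\tx{f}$ at a different travel time, since only the sum of the path-dependent part and $\lambda_\tx{t} t(\tx{s}_\tx{f})$ is pinned down. I would handle this either by assuming uniqueness of the minimizer (for example via the strict convexity contributed by the quadratic regularization $\tx{w}_1 a^2+\tx{w}_2 j^2$ once $F$ is eliminated through \eqref{eq:F2a}), or equivalently by reading the lemma as the assertion that \emph{there exists} an optimizer of the sub-problem at $\zeta+\delta\zeta$ whose terminal travel time equals $t^*(\lambda_\tx{t},s_\tx{H}|\zeta)$, which is precisely what the construction above delivers.
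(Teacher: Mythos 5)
Your argument is correct and follows essentially the same route as the paper, which disposes of the lemma in one sentence by citing Bellman's principle of optimality (any tail of an optimal trajectory is itself optimal). Your write-up is more careful than the paper's: in particular, the non-uniqueness caveat you raise --- that a different minimizer of the tail sub-problem could reach $\tx{s}_\tx{f}$ at a different terminal time, so one needs either strict convexity from the $\tx{w}_1 a^2+\tx{w}_2 j^2$ regularization or an existential reading of the claim --- is a genuine gap in the paper's one-line proof that you correctly identify and patch.
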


\begin{proof}
The proof follows directly from Bellman's principle of optimality, i.e. \textit{any tail of an optimal trajectory is an optimal solution as well} \cite{bellman57}.
\end{proof}

For an MHMPC, Lemma~\ref{lemma_tstar} does not hold even if disturbances are predicted exactly and there is no model miss-match. This is because new information is added as the prediction horizon moves forward at each MPC update. 
However, if the prediction horizon is much longer than the interval between two consecutive updates, then for different $\zeta$, it can be assumed
\begin{align}
    t^*(\lambda_\tx{t},s_\tx{H}|\zeta)-t_\tx{H}(\zeta)\approx t^*(\lambda_\tx{t},s_\tx{H}|\zeta^+)-t_\tx{H}(\zeta^+)
\end{align}
where $\zeta^+$ is the instance of the MHMPC update following that at $\zeta$. 
Fig. \ref{fig:res3B_cvev} demonstrates the overlapped curves of the final time difference versus the time costate for a CV and an EV, where $\zeta=\SI{0}{m}$ and $\zeta^+=\SI{300}{m}$. Thus, it is also possible for an MHMPC to update the time costate over the MPC loop.

\begin{figure}[t]
\centering
\begin{tikzpicture}
\node[rotate=0] at(0,0){\includegraphics[width=0.95\linewidth]{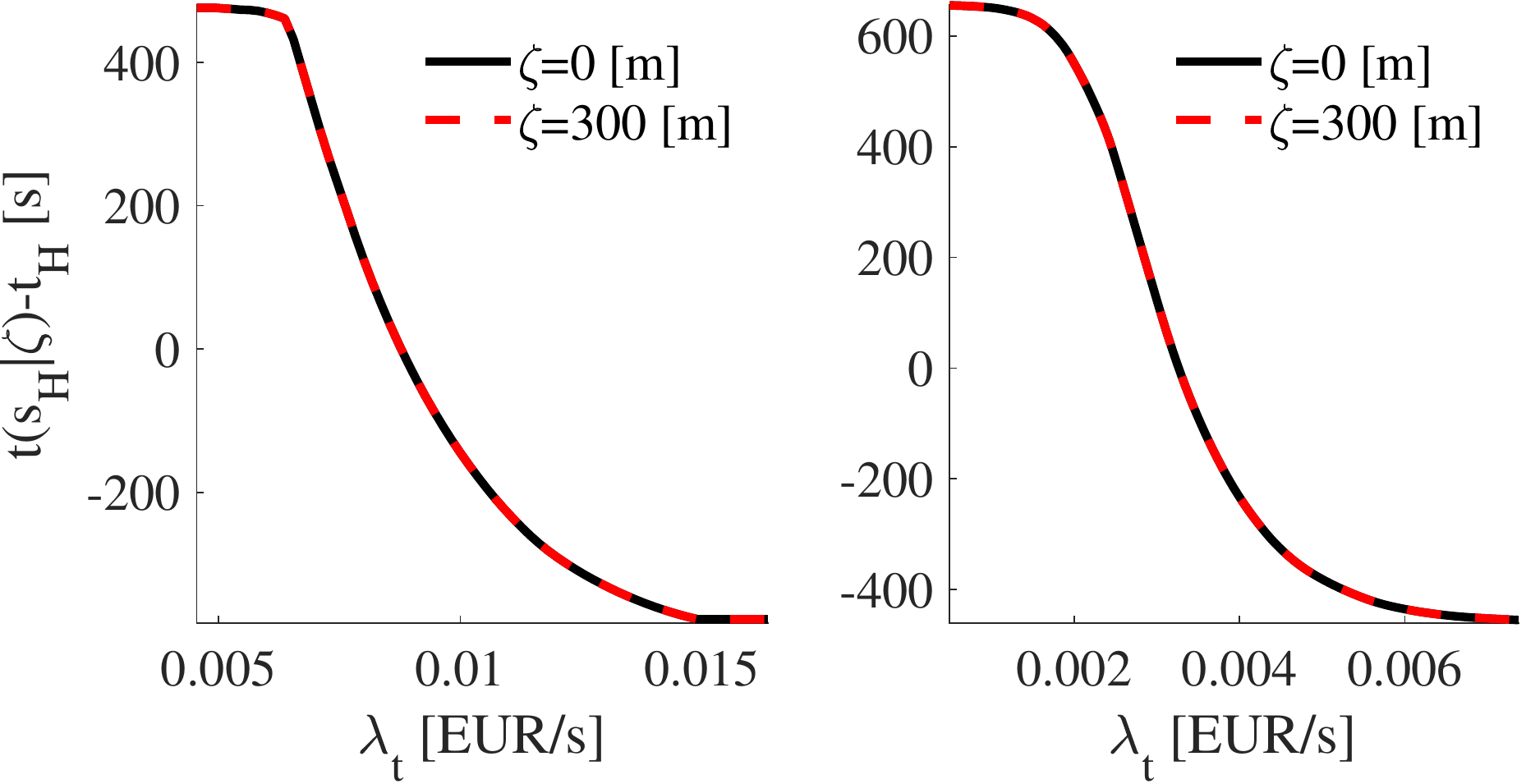}};
\node at(-1.4,-2.3){\footnotesize (a)};
\node at(2.85,-2.3){\footnotesize (b)};
\end{tikzpicture}
\vspace{-.4cm}
\caption{\footnotesize Difference between calculated time at the end of horizon and the desired maximum time for varying time costate using MHMPC scheme, where $\zeta=\SI{0}{m}$ and $\zeta^+=\SI{300}{m}$. Overlap of the curves for different $\zeta$ values shows that $\lambda_\tx{t}$ can be evaluated only once per each MPC update, rather than waiting for a full convergence. The curves are derived for, (a) a CV, and (b) an EV.}
\label{fig:res3B_cvev}
\end{figure}

Problem~\eqref{eq:J_lambda} is then solved by a derivative free Newton method, where the Newton iterates are spread across the MPC updates without waiting for a full convergence, i.e. by performing one Netwon step per update. A flowchart of the proposed algorithm is depicted in Fig.~\ref{fig:flwchrt}b, while more details on the Newton method is provided in Appendix~\ref{ap:lambda_opt}. 

\subsection{Real-time iterations SQP over the MPC loop}\label{subsec:rti}
For a given $\lambda_\tx{t}$ it remains to solve problem~\eqref{eq:p3_bottom}. It will be shown later, in Section~\ref{sec:cases}, that for the case of conventional and electric vehicle powertrians, problem~\eqref{eq:p3_bottom} is indeed a smooth NLP that can be solved by SQP. However, instead of sequentially solving a QP until linearization error is equal to zero, it is computationally efficient to spread the SQP over MPC updates, which is provided by RTI. The idea is to solve only a single QP per MPC update, without waiting for a full convergence. The obtained solution is possibly sub-optimal, but due to the contractivity of the RTI scheme as shown in \cite{diehl05}, the real-time iterates quickly approach the optimal solution during the runtime of the process. 

As the SQP is stopped prematurely, it is important to show that the obtained solution by solving a single QP is feasible in the original NLP. 
Feasibility can be guaranteed if the domain of the QP, obtained by linearizing the nonlinear constraints in problem~\eqref{eq:p3_bottom}, is an inner approximation of the feasible set of the NLP~\eqref{eq:p3_bottom}. This is indeed the case for conventional and electric vehicle powertrians, which will be shown in Section~\ref{sec:cases}.


\section{Application to CV and EV}\label{sec:cases}
This section proposes several steps that show how the computationally efficient algorithm proposed in Section~\ref{sec:cea} is applied to a CV and an EV.



\subsection{Conventional vehicle}\label{subsec:on-cv}

A conventional powertrain includes an ICE to transform chemical fuel energy to mechanical propulsion energy through a multiple-gear transmission.

A static fuel mass rate map for a given pair of rotational speed and engine torque is obtained by gathering steady-state data from a dynamic simulation model of a diesel engine, presented in \cite{wahlstrom11}. Subsequently, efficiency map and torque limits are derived, see Fig. \ref{fig:eta_cv}. According to the efficiency isolines, it is desirable to avoid operating the ICE at low speed and torque, where efficiency is low. 

Fig.~\ref{fig:eta_cv} also illustrates a negative torque limit for an additional braking system, including a retarder, a compression release engine brake and/or an exhaust pressure governor. The additional braking is preferred over the service braking in order to reduce wear and avoid lock up of the braking pads. 
Using \eqref{eq:trn}, the negative torque is translated to negative force on the wheel side as
\begin{align}
    F_\tx{brk} = F_\tx{A} + F_\tx{S},
\end{align}
where $F_\tx{S}$ and $F_\tx{A}$ are forces by the service brakes and the additional braking system. The minimum negative additional force limit for a given kinetic energy is
\begin{align}
    F_{\tx{Amin}}(E)= \min_\gamma F_{\gamma\tx{A}}(E,\gamma)
\end{align}
where $F_{\gamma\tx{A}}$ denotes the minimum negative additional force for each gear. The lower bound on the traction force is zero, i.e. $F_{\gamma\tx{min}}(E)=0$.

\begin{figure}
\centering
\begin{tikzpicture}
\node[rotate=0] at(0,0){\includegraphics[width=0.95\linewidth]{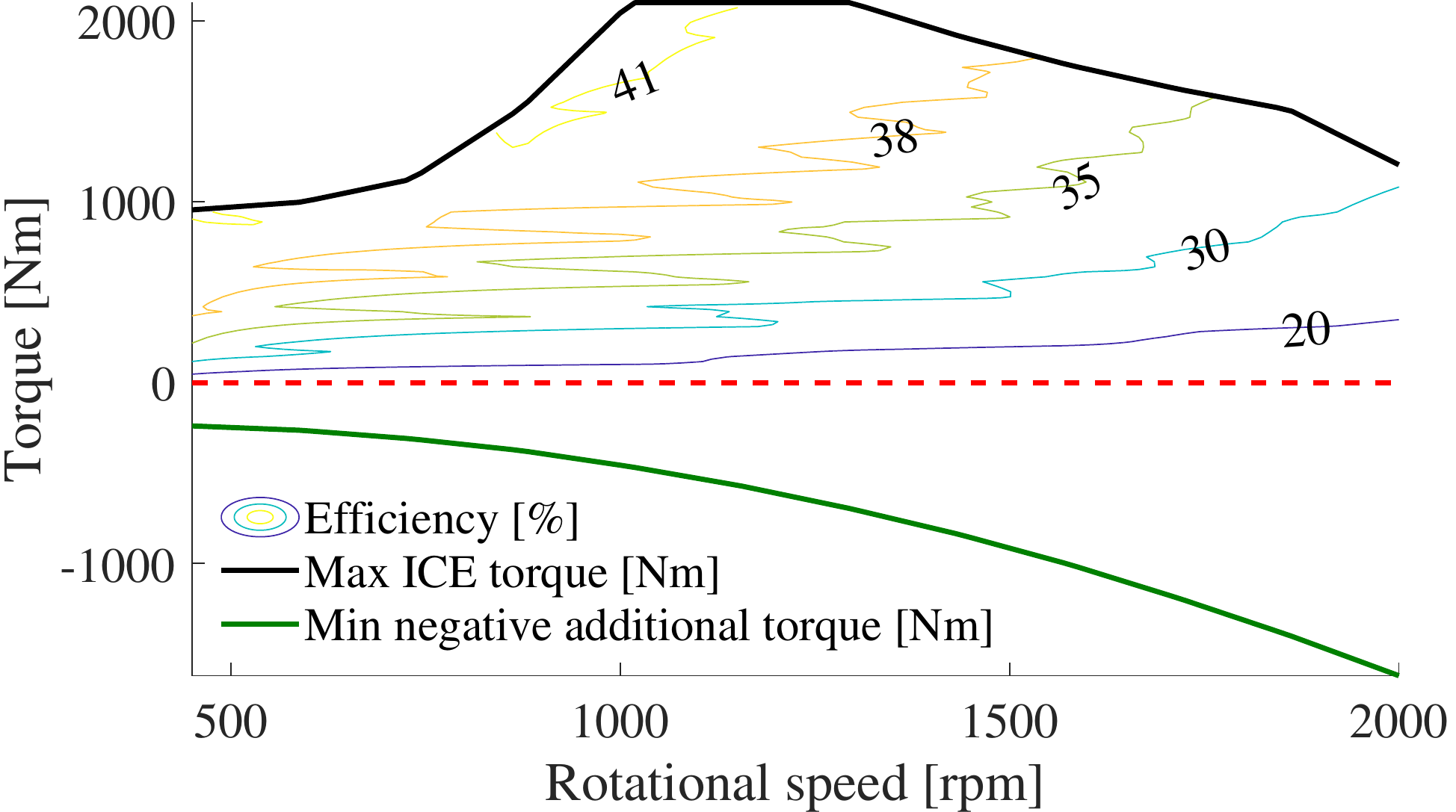}};
\end{tikzpicture}
\vspace{-.35cm}
\caption{\footnotesize Steady-state efficiency map and maximum torque limit of the ICE. The negative torque limit illustrates the braking capability of the additional braking system that includes a retarder, an exhaust pressure governor and/or a compression release engine brake.}
\label{fig:eta_cv}
\end{figure}

The two-dimensional fuel mass rate map of the ICE translates to a three-dimensional map on the wheels side. This three-dimensional map, denoted as  $\mu_\tx{w}(E,F,\gamma)$, can be expressed in terms of kinetic energy, traction force and gear using \eqref{eq:v2E} and \eqref{eq:trn}. Subsequently, a map, which represents the parametric internal power function, $P_\tx{w}(E,F,\gamma)$, can be derived as
\begin{align}
    P_\tx{w}(E, F,\gamma)=\mu_\tx{w}(E, F,\gamma)Q_\tx{lhv}
    \label{eq:p_mu}
\end{align}
where $Q_\tx{lhv}$ is diesel heating value.

The bi-level program \eqref{eq:p2}, can be extended for a CV case study, including the negative force region, which originates from the summation of negative additional force and service braking force, as
{\allowdisplaybreaks
\begin{subequations} \label{eq:p2cv}
\begin{align}
&\min_{j,F_\tx{brk}} \int_{0}^{s_\tx{H}} \Bigg(\frac{\tx{c}_\tx{eg}P_\tx{w}(E, F,\gamma^*)}{\sqrt{\frac{2E(s)}{\tx{m}}}} + \tx{w}_1 a^2(s) + \tx{w}_2 j^2(s)\Bigg)\tx{d}s \label{eq:p2cv_cost}\\
& \text{subject to: \eqref{eq:p1_time}-\eqref{eq:p1_tf}} \nonumber\\
& \resizebox{\linewidth}{!}{
$\gamma^*(s)=\begin{cases}
    \argmin_{\gamma} P_\tx{w}(E,F,\gamma), & \text{if $F+F_\tx{brk}\geq0$}.\\
    \argmax_{\gamma} F_{\gamma\tx{A}}(E,\gamma), & \text{if $F_\tx{Amin}$}(E)\leq F+F_\tx{brk}<0\\ \argmin_{\gamma} F_{\gamma\tx{A}}(E,\gamma), & \text{if $\underline{\tx{F}_\tx{brk}}\leq F+F_\tx{brk}< F_\tx{Amin}(E)$}
  \end{cases}$
  }\\
& \text{subject to: } \gamma(s) \in \{1, 2, ..., \upgamma_\tx{max} \}\\
& \hspace{18mm} F(s)+F_\tx{brk}(s)\in [\underline{\tx{F}_\tx{brk}},F_{\gamma\tx{max}}(E)]\label{eq:p2cv_Fbound}
\end{align}
\end{subequations}}%
Note that the traction force, $F$, and the total braking force, $F_\tx{brk}$, cannot have non-zero values simultaneously, i.e. it is not the case that $F>0$ and $F_\tx{brk}<0$ at the same time.

To approach the offline-optimal gear selection problem \eqref{eq:p2cv}, it is possible to grid the feasible sets of kinetic energy and total force, i.e. $F+F_\tx{brk}$. To this end, in the positive force region, for any feasible combination of longitudinal velocity (kinetic energy) and traction force, the optimal gear is the one that minimises energy consumption. 
In the negative force region, if the total demanded force is higher than the minimum negative additional force, the highest possible gear is selected, which avoids unnecessary down-shifting. However, if  total demanded force is lower than the minimum negative additional force, the lowest possible gear is selected, since it provides the most possible negative additional force, see Fig. \ref{fig:gear_cv}. The remaining demanded negative force is covered by the service brakes.

\begin{figure}
 \centering
 \includegraphics[width=0.95\linewidth]{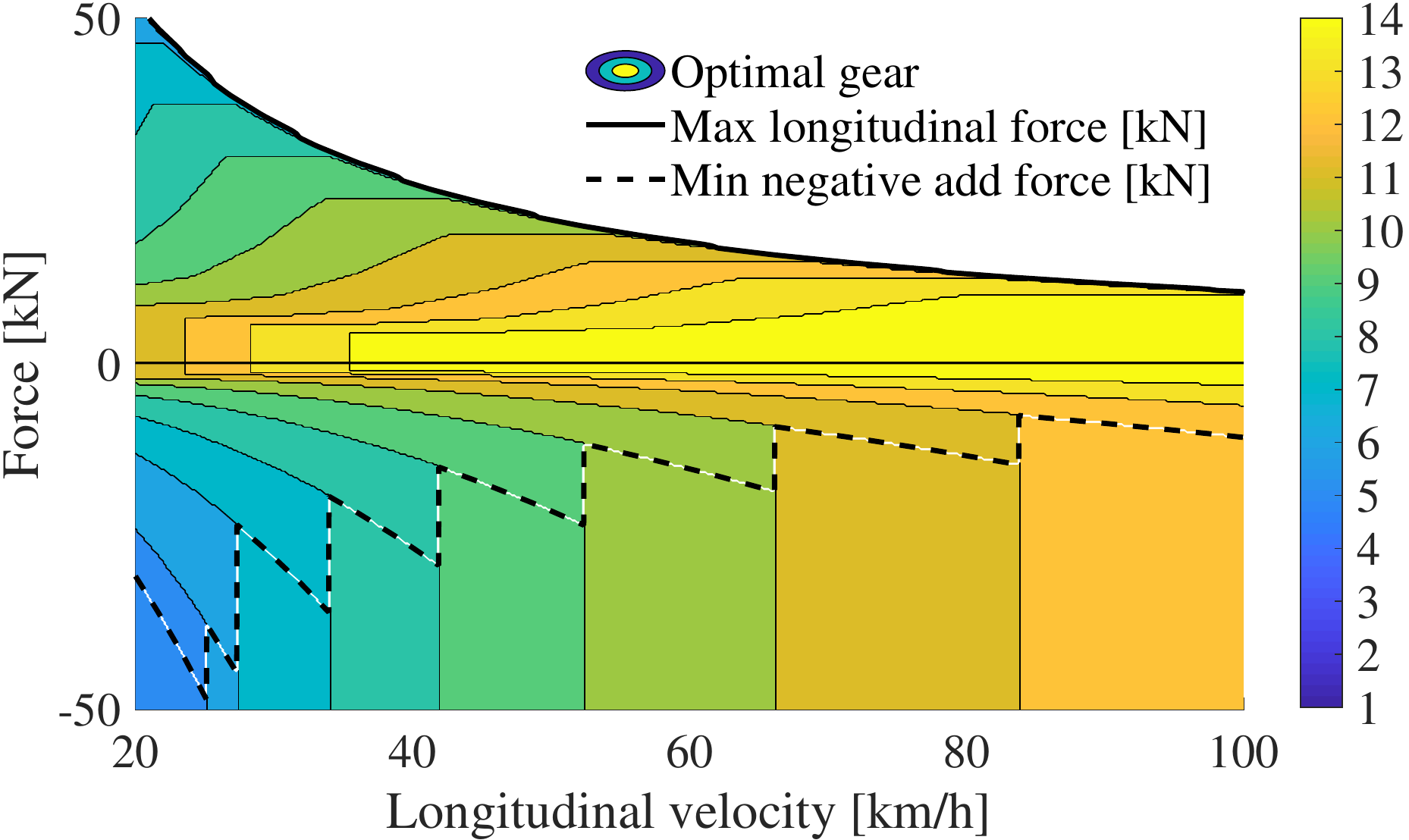}\vspace{-.3cm}
   \caption{\footnotesize Offline-optimised gear map together with maximum traction force and minimum negative additional force. In the positive force region, the optimal selected gear is the one that minimises fuel consumption, which for the studied powertrain coincides with the highest feasible gear. In the negative force region, if the total force is lower than the minimum negative additional force, the lowest possible gear is selected, since it provides the most possible negative additional force. The remaining demanded negative force is covered by the service brakes. However, if the total force is higher than the minimum negative additional force, to avoid unnecessary down-shifting, the highest possible gear is selected. 
   }
   \label{fig:gear_cv}
\end{figure}

\begin{figure}[t]
\centering
\begin{tikzpicture}
\node[rotate=0] at(0,0){\includegraphics[width=0.95\linewidth]{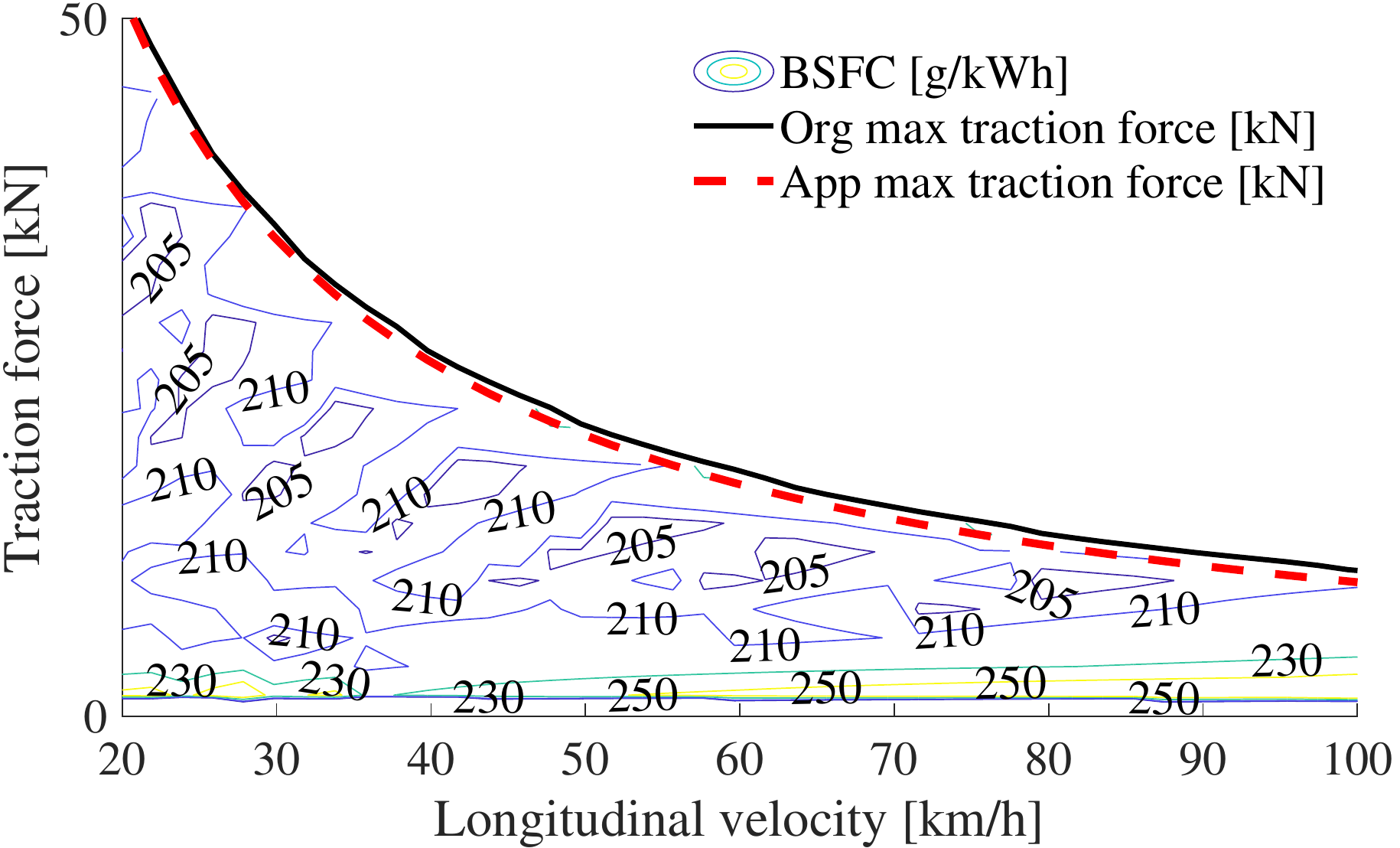}};
\end{tikzpicture}
\vspace{-.35cm}
\caption{\footnotesize Offline-optimised BSFC map together with original and approximate maximum traction force as well as minimum negative additional force. The approximate limit is an inner approximation for the longitudinal velocities above \SI{8}{km/h}.} 
\label{fig:opt_bsfc}
\end{figure}


The optimal brake specific fuel consumption (BSFC) map and maximum traction force curve are depicted in Fig. \ref{fig:opt_bsfc}. The optimal BSFC refers to the minimum burnt fuel, which is obtained by optimising the internal power in \eqref{eq:opt_p}.




The internal power drawn from fuel using \eqref{eq:opt_p}, is approximated by the following expression
\begin{align}
     P_{\gamma}(v,F) \approx \tx{p}_\tx{e0} + \tx{p}_\tx{e1} v^3(s) + \tx{p}_\tx{e2} v(s)F(s)\label{eq:exp_fv_cv}
\end{align}
with $\tx{p}_\tx{e0}, \tx{p}_\tx{e1}, \tx{p}_\tx{e2} \geq 0$. 
\begin{figure}
 \centering
 \includegraphics[width=0.95\linewidth]{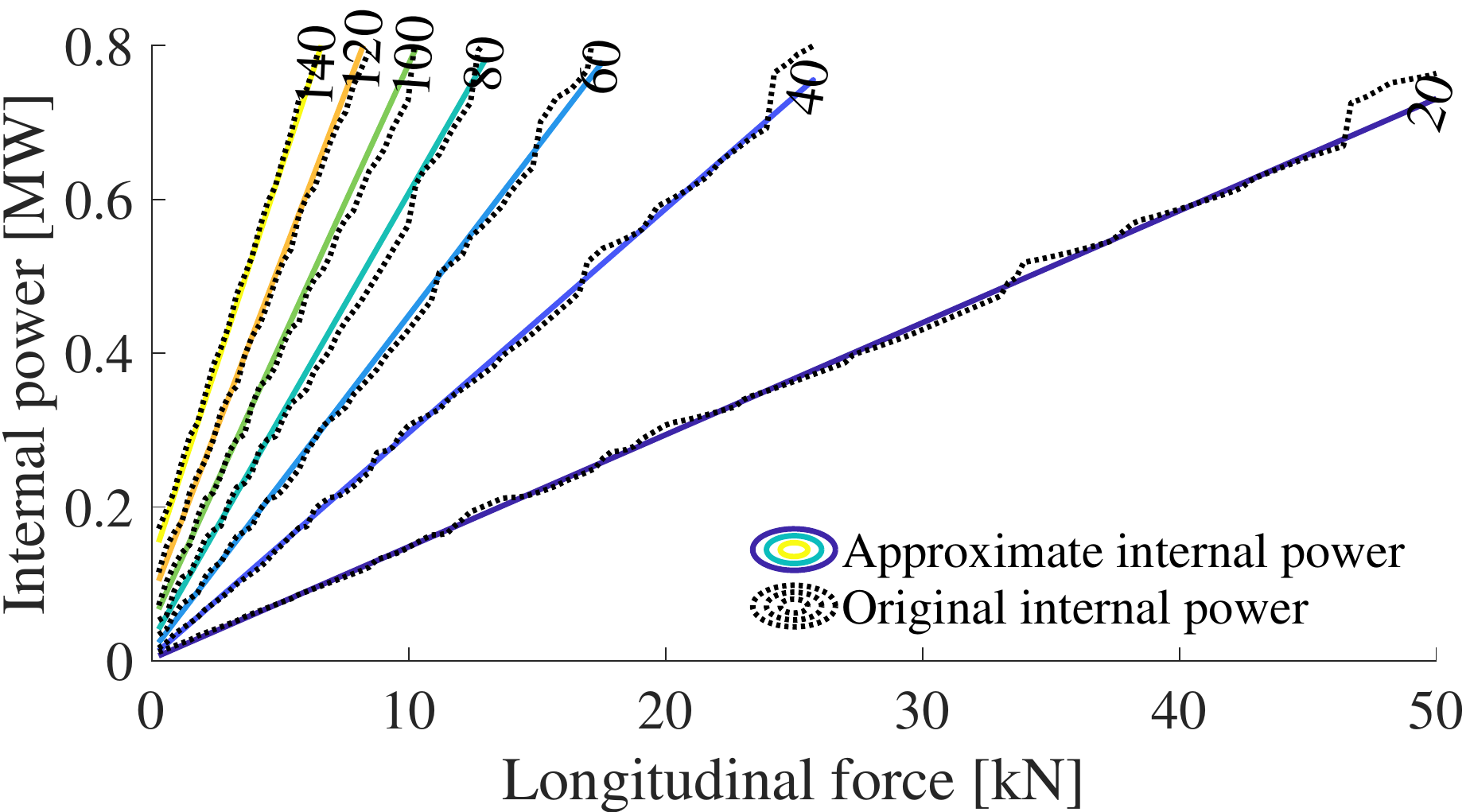}\vspace{-.3cm}
   \caption{\footnotesize Original and approximated internal power drawn from fuel for a given longitudinal velocity and traction force.}
   \label{fig:app_cv}
\end{figure}
As illustrated in Fig. \ref{fig:app_cv}, for the studied engine model it is sufficient to use a first order term in $F$, although it is possible to include higher order terms as well, without significant increase in computational effort. Similar expressions for model abstraction of fuel mass rate are exploited in \cite{murgovski16} and several references therein. Using \eqref{eq:v2E} and \eqref{eq:exp_fv_cv}, the stage cost \eqref{eq:p3_bottom} transforms into
\begin{equation}
\begin{split}
       V_{\tx{CV}}(\cdot,\lambda_\tx{t}) & \approx \frac{\tx{c}_\tx{eg}(\tx{p}_\tx{e0}+\lambda_\tx{t}^*)\sqrt{\tx{m}}}{\sqrt{2 E(s)}} + \frac{2\tx{p}_\tx{e1}}{\tx{m}} E(s) +\\ &+\tx{p}_\tx{e2}F(s)+\tx{w}_1 a^2(s)+\tx{w}_2 j^2(s)
       \label{eq:cv_cost_stage1}
\end{split}
\end{equation}
which is a convex second order cone function in terms of $E$, $a$, $j$, $F$ and $F_{\tx{brk}}$. 

The maximum traction force limit, see Fig. \ref{fig:opt_bsfc}, is approximated by
\begin{align}
    & F_{\gamma\tx{max}}(E)\approx \min\Bigg\{\overline{F},\tx{y}_\tx{0} + \frac{\tx{y}_\tx{1} \sqrt{\tx{m}}}{\sqrt{2E(s)}}\Bigg\}
    \label{eq:cv_force_max}
\end{align}
where $\overline{F}$ is the maximum constant traction force, and $\tx{y}_\tx{1}$ resembles the maximum engine power, as it can be alternatively written as a division of power with vehicle speed. The coefficients $\tx{y}_\tx{0}$ and $\tx{y}_\tx{1}$ are obtained by solving a linear program, see Appendix \ref{ap:lin_prg} for details. The approximated force limit \eqref{eq:cv_force_max} is an inner approximation of the original force for speeds above \SI{8}{km/h}, see Fig. \ref{fig:opt_bsfc}, which is acceptable for the highway scenarios investigated in this paper. 

The problem \eqref{eq:p3} with the stage cost \eqref{eq:cv_cost_stage1} is non-convex nonlinear program, because of the nonlinear term $\tx{y}_1/\sqrt{E(s)}$ in \eqref{eq:cv_force_max}. Due to the sign of $\tx{y}_1\geq 0$, this term is a convex function (a convex problem, though, requires a concave function here). It is possible to transform \eqref{eq:p3} to a convex second order cone program (SOCP) by linearizing the maximum force limit in \eqref{eq:cv_force_max}. Note that linearizing any convex function about any trajectory, is always an inner approximation. Since the inner approximation is conservative, it is guaranteed that despite possibly being sub-optimal, all obtained solutions (if such solutions exist) are also feasible in the original non-convex problem. For more details, see Appendix \ref{ap:full_stat}.


\begin{figure}
 \centering
 \includegraphics[width=0.95\linewidth]{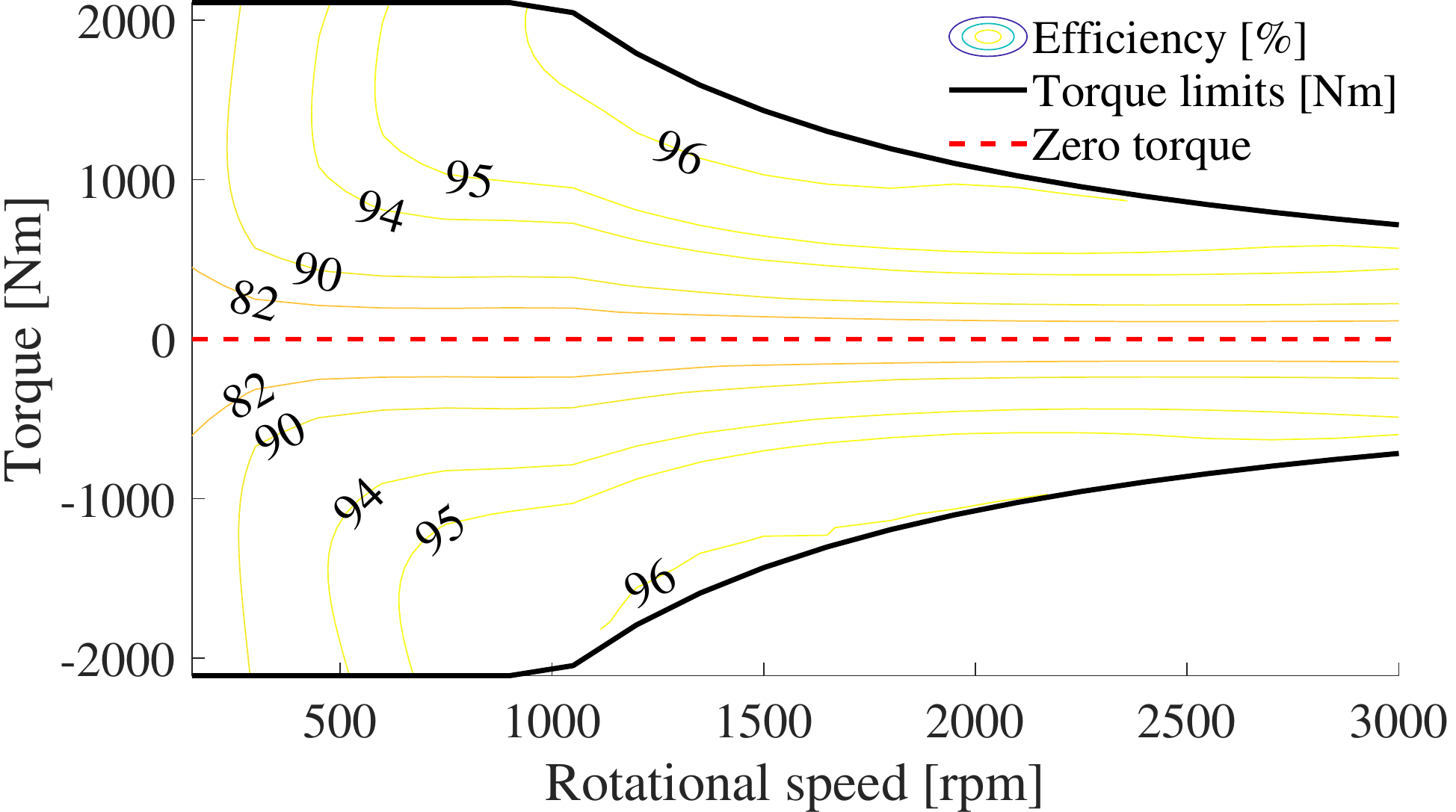}\vspace{-.3cm}
   \caption{\footnotesize 
   Steady-state  efficiency map of EM together with traction force limits delivered by EM.}
   \label{fig:eta_ev}
\end{figure}

\subsection{Fully electric vehicle}\label{subsec:on-ev}

In the fully electric powertrain, the EM converts electricity to mechanical power in motoring mode, whereas it converts mechanical power to electricity in generating mode of operation. In the generating mode, the energy is recuperated and stored in the electric battery, when decreasing kinetic energy by braking or decreasing potential energy while rolling downhill. Note that the electric powertrain is assumed to have a single-gear transmission system. 

For a given pair of rotational speed and torque, EM efficiency map is shown in Fig. \ref{fig:eta_ev}, using static internal electric battery power. In Fig. \ref{fig:eta_ev}, positive and negative torque regions correspond to the motoring and the generating modes of operation, respectively.

It is assumed that a single-gear transmission system conveys the power from the battery to the wheels. Therefore, there is no need for offline gear optimisation, i.e. $P_\gamma(v,F) = P_\tx{w}(v,F,\gamma)$.

The internal power drawn from the electric battery is approximated by the following expression
\begin{equation}
    \begin{split}
        P_\gamma(v,F) & \approx \tx{p}_\tx{m0} + \tx{p}_\tx{m1} v^3(s) + \tx{p}_\tx{m2} v(s)F(s)+\\
        & +\tx{p}_\tx{m3} v(s)F^2(s)
        \label{eq:exp_fv_ev}
    \end{split}
\end{equation}
with $\tx{p}_\tx{m0}, \tx{p}_\tx{m1}, \tx{p}_\tx{m2}, \tx{p}_\tx{m3} \geq 0$. Fig. \ref{fig:app_ev} demonstrates that the approximated model describes well the original internal battery power.


Using \eqref{eq:v2E}, \eqref{eq:F2a} and \eqref{eq:exp_fv_ev}, the stage cost \eqref{eq:p3_bottom} transforms into

\begin{equation}
\begin{split}
V_{\tx{EV}}(\cdot,\lambda_\tx{t}) & \approx \frac{\tx{c}_\tx{eg}(\tx{p}_\tx{m0}+\lambda_\tx{t}^*)\sqrt{\tx{m}}}{\sqrt{2 E(s)}} + \frac{2\tx{p}_\tx{m1}}{\tx{m}} E(s) + \tx{p}_\tx{m2}F(s) + \\
& + \tx{p}_\tx{m2}F^2(s) + \tx{w}_1 a^2(s)+\tx{w}_2 j^2(s).
\label{eq:ev_cost_stage2}
\end{split}
\end{equation}

The traction force limits, see Fig. \ref{fig:inn_app_ev}, are approximated by
\begin{align}
    & F_{\gamma\tx{min}}(E)\approx \max\Bigg\{\underline{F},\tx{x}_\tx{0} + \frac{\tx{x}_\tx{1} \sqrt{\tx{m}}}{\sqrt{2E(s)}}\Bigg\}
    \label{eq:ev_force_min}\\
    & F_{\gamma\tx{max}}(E)\approx \min\Bigg\{\overline{F},\tx{y}_\tx{0} + \frac{\tx{y}_\tx{1} \sqrt{\tx{m}}}{\sqrt{2E(s)}}\Bigg\}
    \label{eq:ev_force_max}
\end{align}
where $\underline{F}$ is constant minimum traction force. The coefficients $\tx{x}_\tx{0}$ and $\tx{x}_\tx{1}$, similar to the $\tx{y}_\tx{0}$ and $\tx{y}_\tx{1}$, are the solution of the linear program given in Appendix \ref{ap:lin_prg}. 

According to the signs of $\tx{x}_1\leq 0$ and $\tx{y}_1\geq 0$, the term $\tx{x}_1/\sqrt{E(s)}$ is a concave function and $\tx{y}_1/\sqrt{E(s)}$ is a convex function. Thus, the area between the two force limits \eqref{eq:ev_force_min} and \eqref{eq:ev_force_max} include a concave force set, which leads the problem \eqref{eq:p3} with the stage cost \eqref{eq:ev_cost_stage2} to be a non-convex nonlinear program. By linearizing the force limits, the problem \eqref{eq:p3} with the stage cost \eqref{eq:ev_cost_stage2} can be formulated as a convex SOCP, see Appendix \ref{ap:full_stat}. Note that linearizing any convex function about any trajectory, is always an inner approximation, and linearizing any concave function about any trajectory, results in an outer approximation. Furthermore, the approximations are conservative, therefore, all obtained solutions are inside the feasible force area, see Fig. \ref{fig:inn_app_ev}, and also feasible in the original non-convex problem.

\begin{figure}
 \centering
 \includegraphics[width=0.95\linewidth]{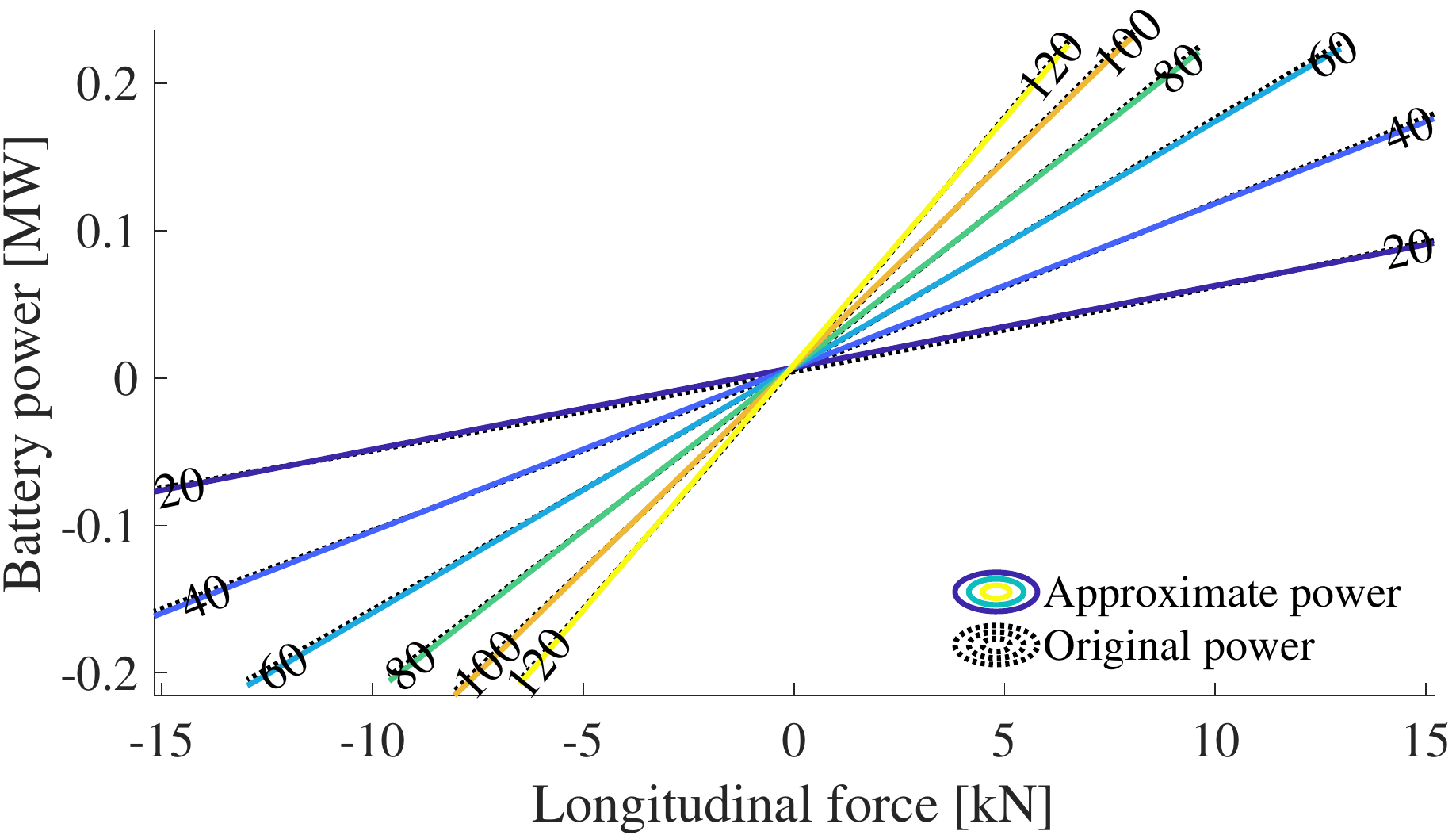}\vspace{-.3cm}
   \caption{\footnotesize Original and approximated internal power drawn from the electric battery for a given longitudinal velocity and traction force.}
   \label{fig:app_ev}
\end{figure}

\begin{figure}
 \centering
 \includegraphics[width=0.95\linewidth]{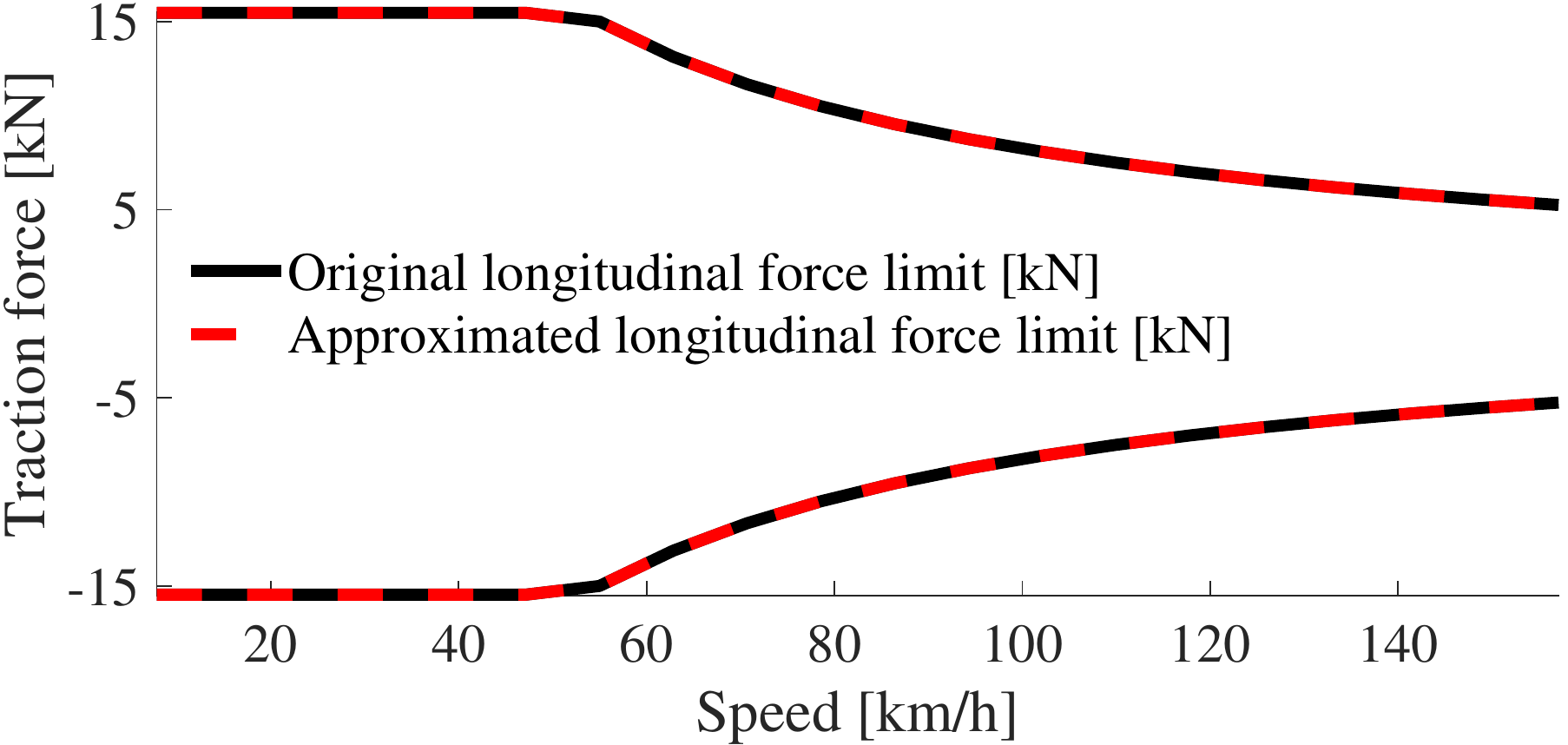}\vspace{-.3cm}
   \caption{\footnotesize Original and approximated traction force limits of the EM. 
   }
   \label{fig:inn_app_ev}
\end{figure}

\begin{table}
\begin{center}
\caption{Simulation parameters} 
\label{tab:par}
\begin{tabular}{l l}
\hline
Gravitational acceleration & $\tx{g}=  \SI{9.81}{m/s^2}$\\
Air density & $\uprho=\SI{1.29}{kg/m^3}$\\
Vehicle frontal area & $\tx{A}_\tx{f}=\SI{10}{m^2}$\\
Rolling resistance coefficient & $\tx{c}_\tx{r}=0.006$\\
Vehicle mass & $\tx{m}=\SI{40000}{kg}$ \\
Aerodynamic drag coefficient & $\tx{c}_\tx{d}=0.5$\\ 
Wheel radius & $\tx{r}_\tx{w}=\SI{0.50}{m}$ \\
Final gear ratio & $\tx{r}_\tx{fg}=3$ \\
Cruising set speed & $\tx{v}_\tx{cru}= \SI{80}{km/h}$\\
Route length & \SI{118}{km} 
\\
Number of samples & $N=400$ \\
Fuel cost & $\tx{c}_\tx{eg}^\tx{f}=\SI{1.51}{EUR/litre}$ \\
Electricity cost & $\tx{c}_\tx{eg}^\tx{e}=\SI{0.18}{EUR/kWh}$ \\
\hline
\end{tabular}
\end{center}
\end{table}

\section{Results}\label{sec:res}

In this paper, simulations are carried out for the CV and the EV over the \SI{118}{km} long road from S{\"o}dert{\"a}lje to Norrk{\"o}ping in Sweden, which is the same route as considered in \cite{eriksson16}. 
The problems \eqref{eq:p4_cv} and \eqref{eq:p4_ev} are discretized using the forward Euler method. For most of the simulation the sampling interval is kept at \SI{300}{m}, unless stated otherwise. The problems are solved in an SHMPC framework, i.e. ${\tx{s}_\tx{Hmax} \geq \tx{s}_\tx{f}}$, where 
travel time at the final position (end of the route) is upper bounded by $\tx{t}_\tx{f}$, using \eqref{eq:final_travel_time}. The simulation parameters are given in Table \ref{tab:par}.



Within the simulations we investigate: (1) how optimisation cost and optimal speed profile change for different discomfort penalties; (2) convergence properties of the algorithm; (3) computation time as a function of the number of samples in the horizon.

\begin{figure}[t!]
 \centering
 \subfigure[Fuel cost vs. RMS jerk.]{
 \includegraphics[width=\columnwidth]{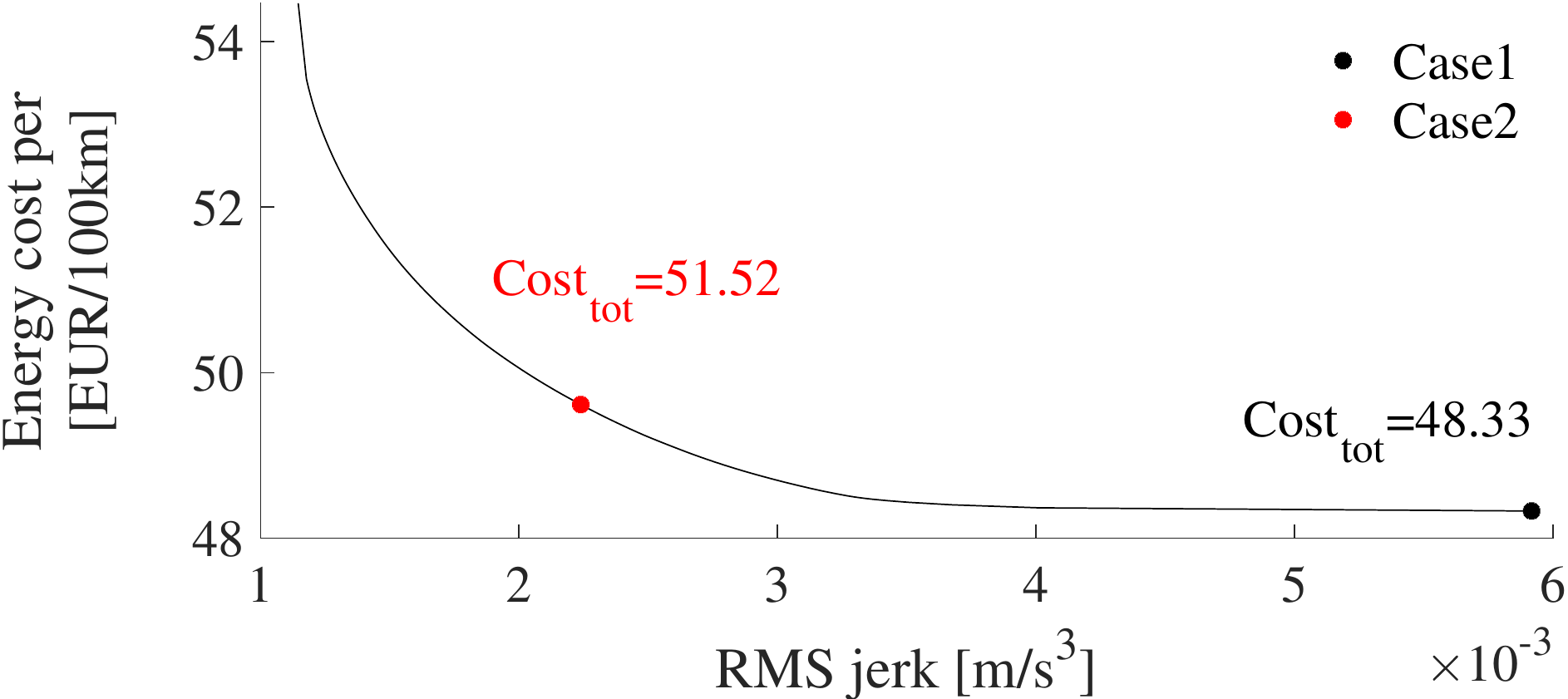}
 \label{fig:cost_rmsj_cv}
 }
 \subfigure[Electricity cost vs. RMS jerk.]{
 \includegraphics[width=0.95\linewidth]{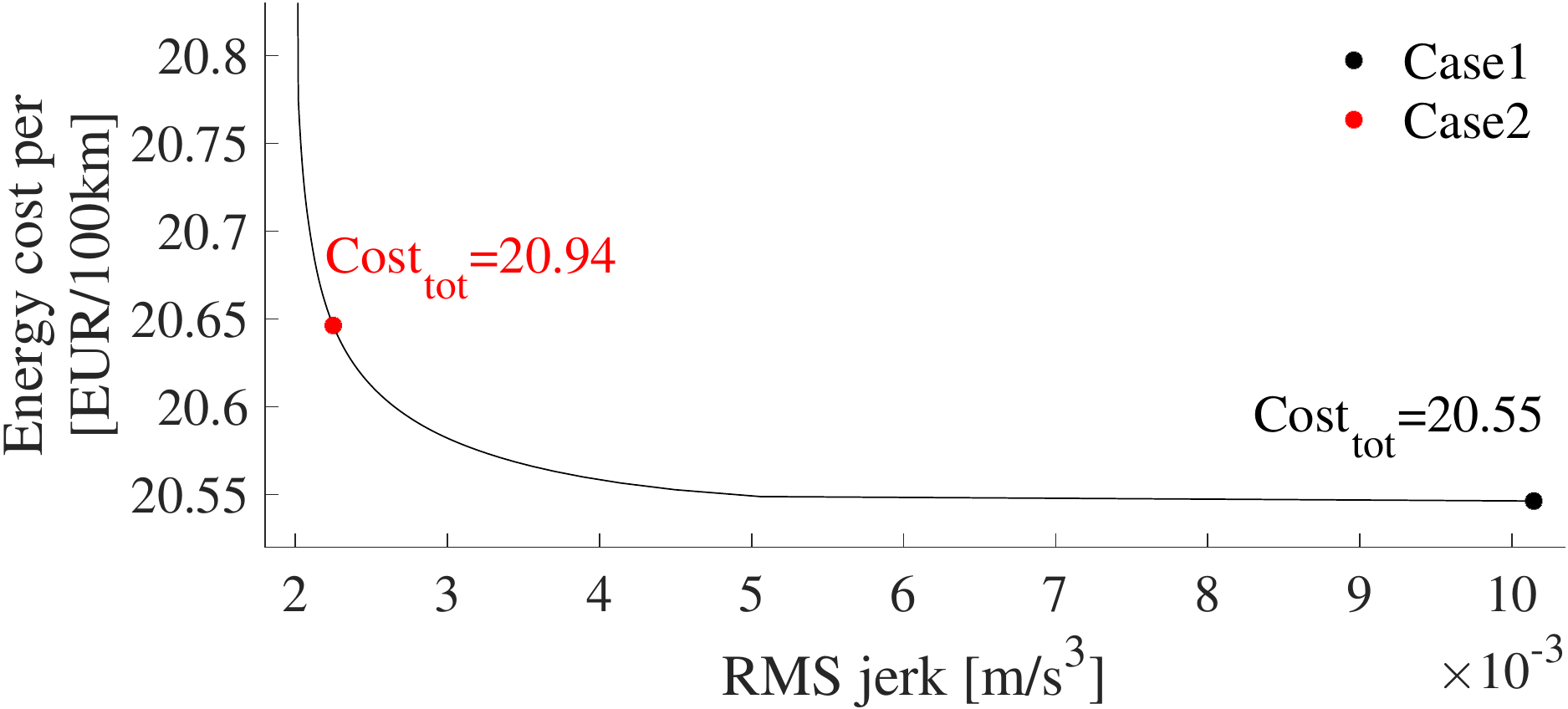}
 \label{fig:cost_rmsj_ev}
 }
 \caption{Energy cost investigation for different jerk penalty factors. For the large penalty factors, RMS jerk is saturated.}
 \label{fig:cost_rmsj}
\end{figure}

\begin{figure*}[t!]
 \centering
 \subfigure[Longitudinal velocity trajectories of CV.]{
 \includegraphics[width=0.48\linewidth]{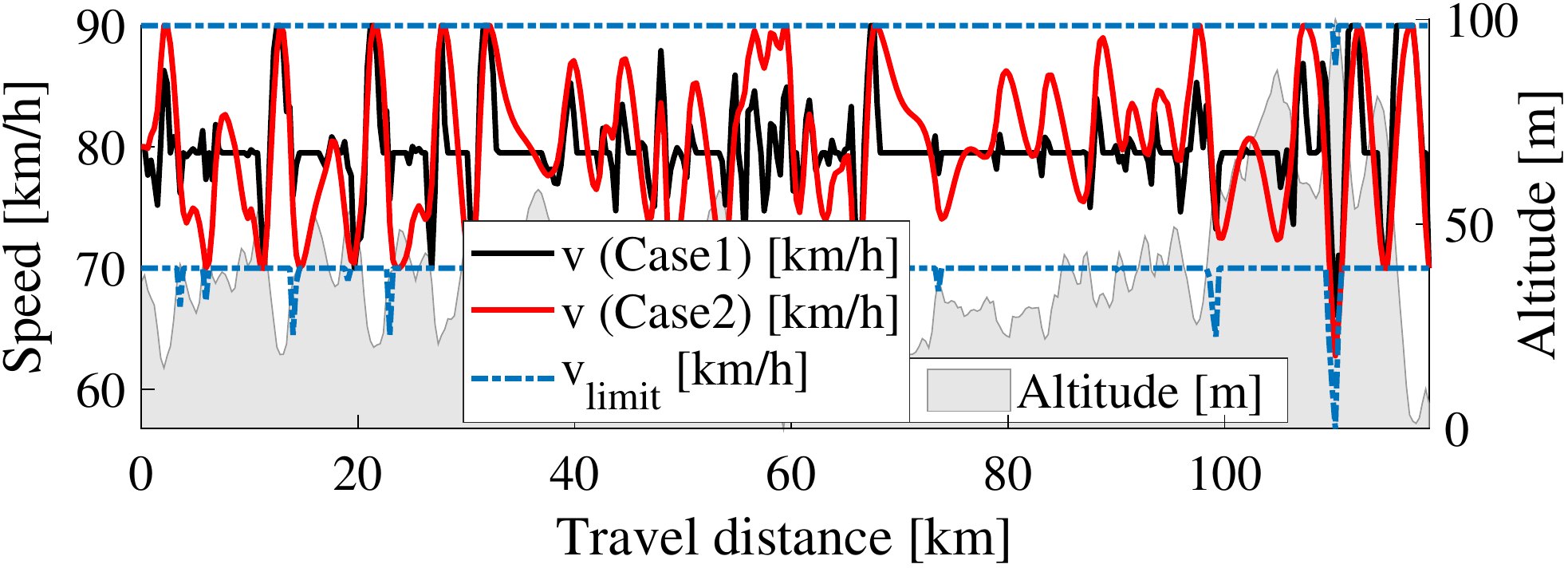}
 \label{fig:v_opt_cv}
 }
 \subfigure[Longitudinal velocity trajectories of EV.]{
 \includegraphics[width=0.48\linewidth]{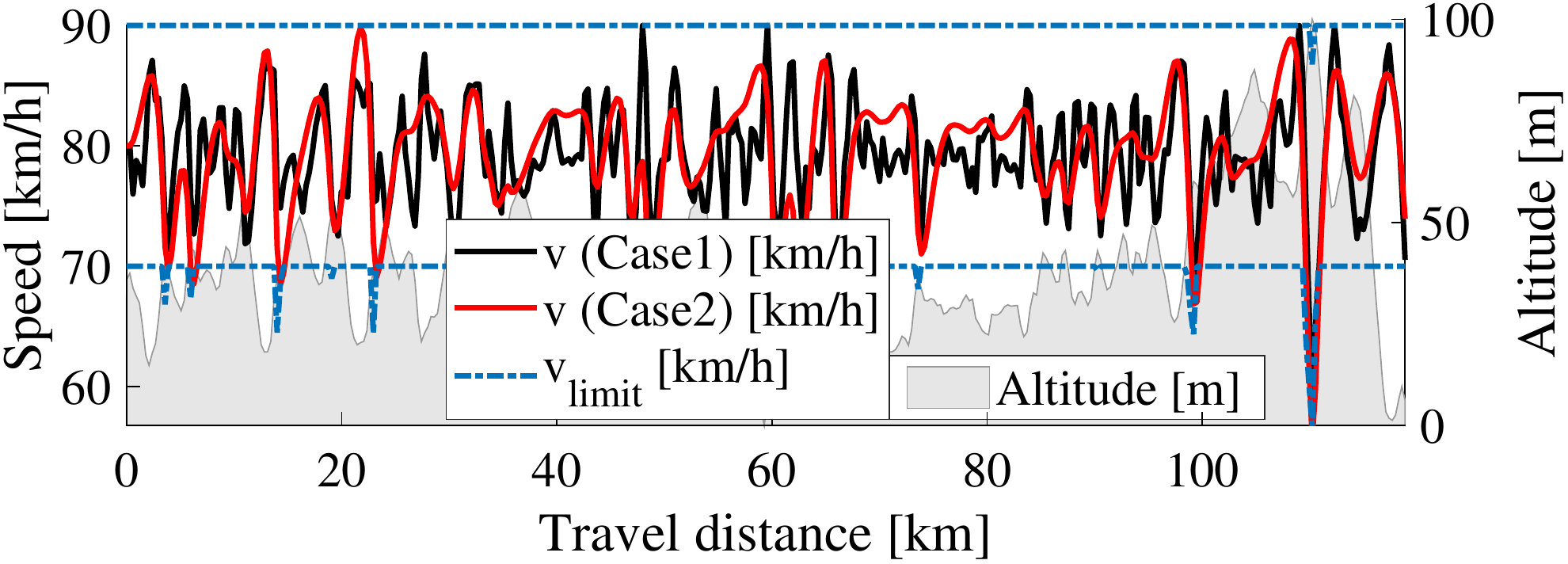}
 \label{fig:v_opt_ev}
 }
 \subfigure[Acceleration trajectories of CV.]{
 \includegraphics[width=0.48\linewidth]{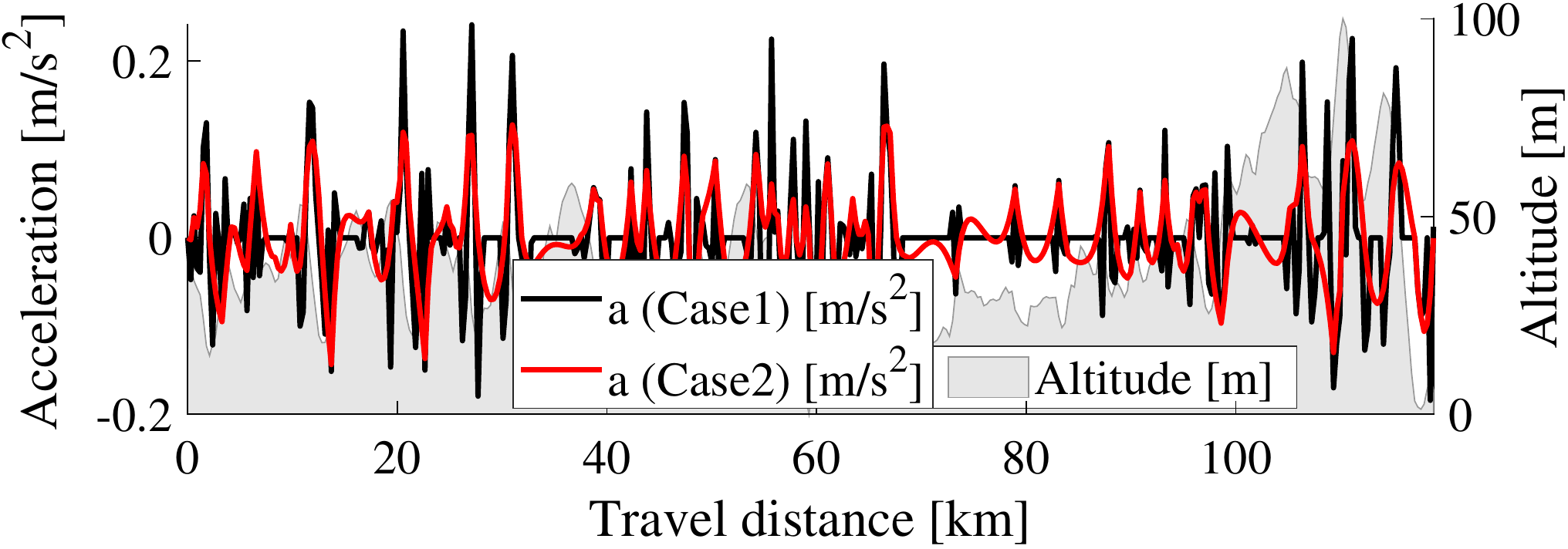}
 \label{fig:acc_opt_cv}
 }
 \subfigure[Acceleration trajectories of EV.]{
 \includegraphics[width=0.48\linewidth]{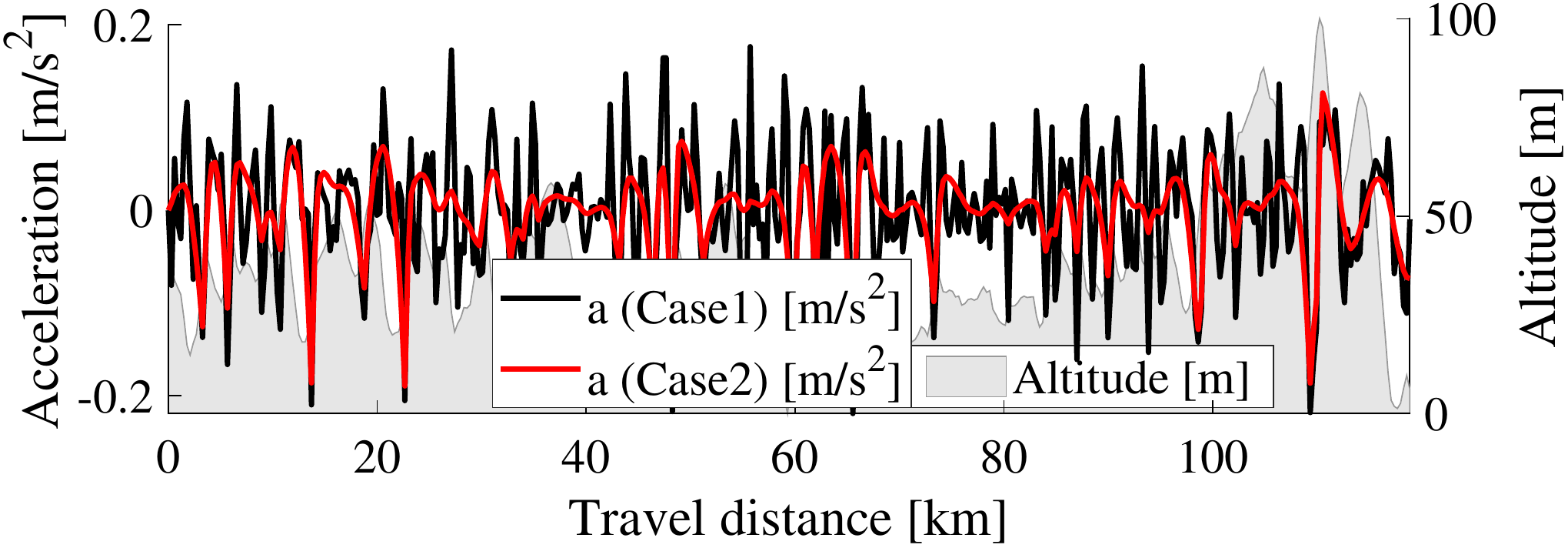}
 \label{fig:acc_opt_ev}
 }
 \subfigure[Jerk trajectories of CV.]{
 \includegraphics[width=0.48\linewidth]{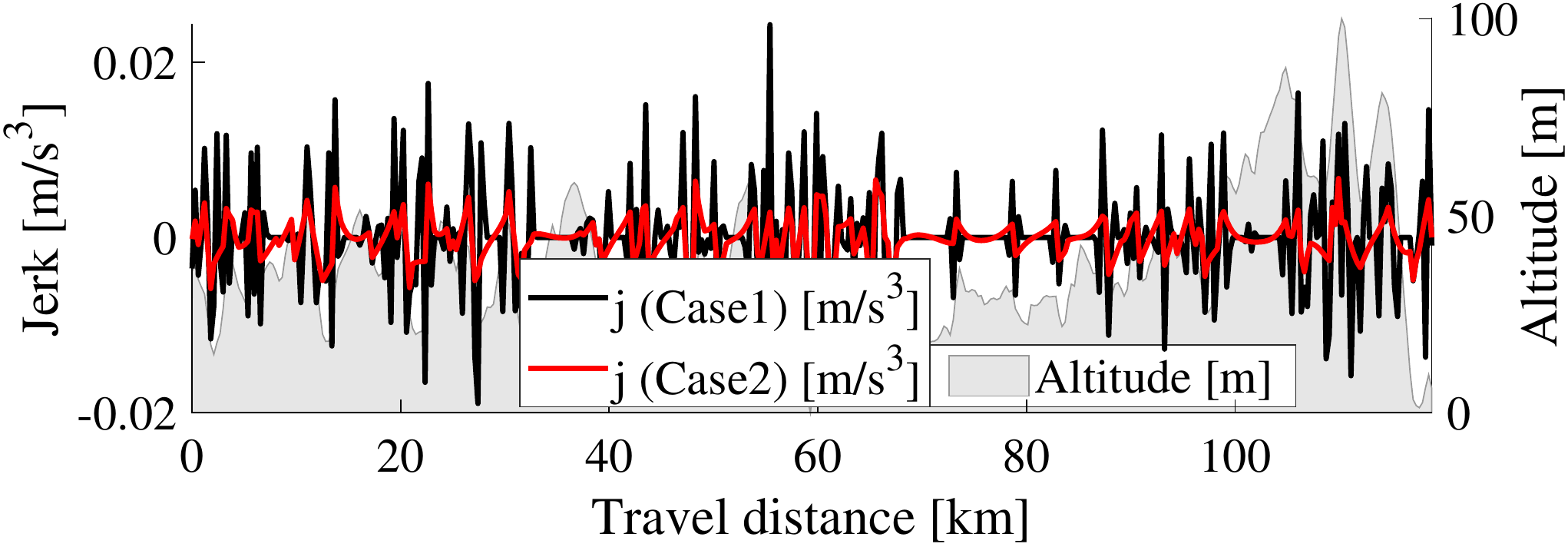}
 \label{fig:jrk_opt_cv}
 }
 \subfigure[Jerk trajectories of EV.]{
 \includegraphics[width=0.48\linewidth]{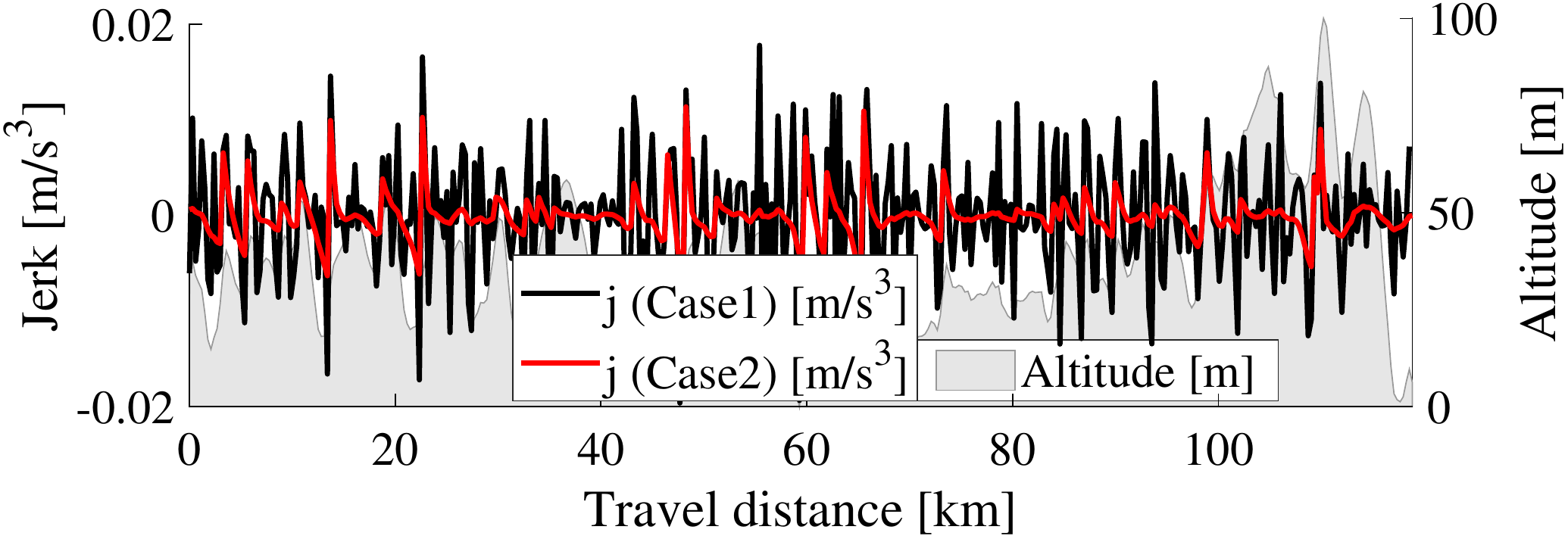}
 \label{fig:jrk_opt_ev}
 }
 \caption{Optimal longitudinal velocity, acceleration and jerk trajectories for CV and EV. Case 2, i.e. which corresponds to comfortable drive, provides smoother profile and more comfortable driving. Thus, the amplitude of fluctuating acceleration and jerk is decreased.}
 \label{fig:opt_sig}
\end{figure*}

\begin{figure}[t!]
\centering
\subfigure[Operating force-speed points of the CV and optimal gear as a contour map.]{
\includegraphics[width=0.95\linewidth]{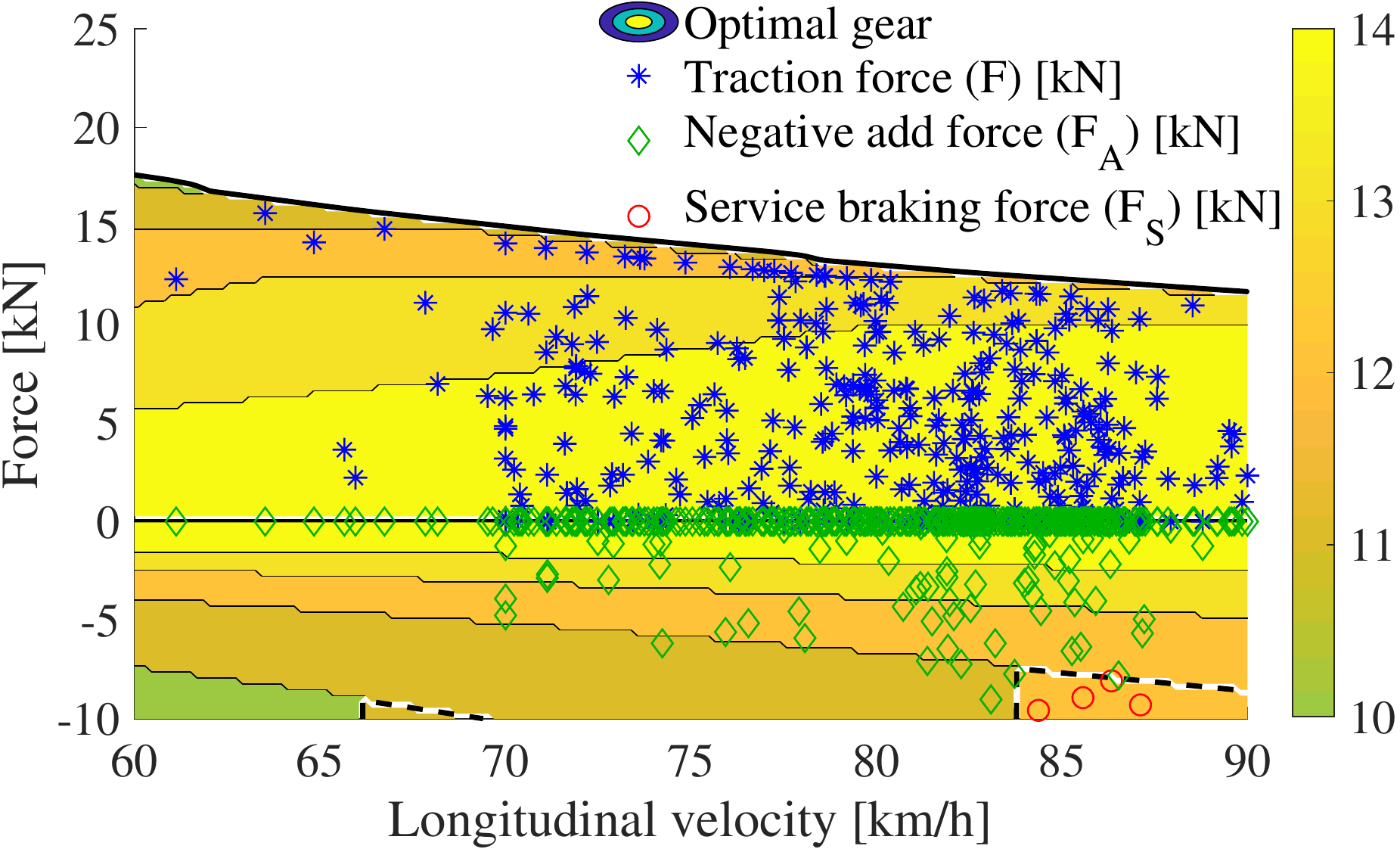}
\label{fig:opt_operate_cv}
}
\subfigure[Optimal force-speed points of the EV.]{
\includegraphics[width=0.95\linewidth]{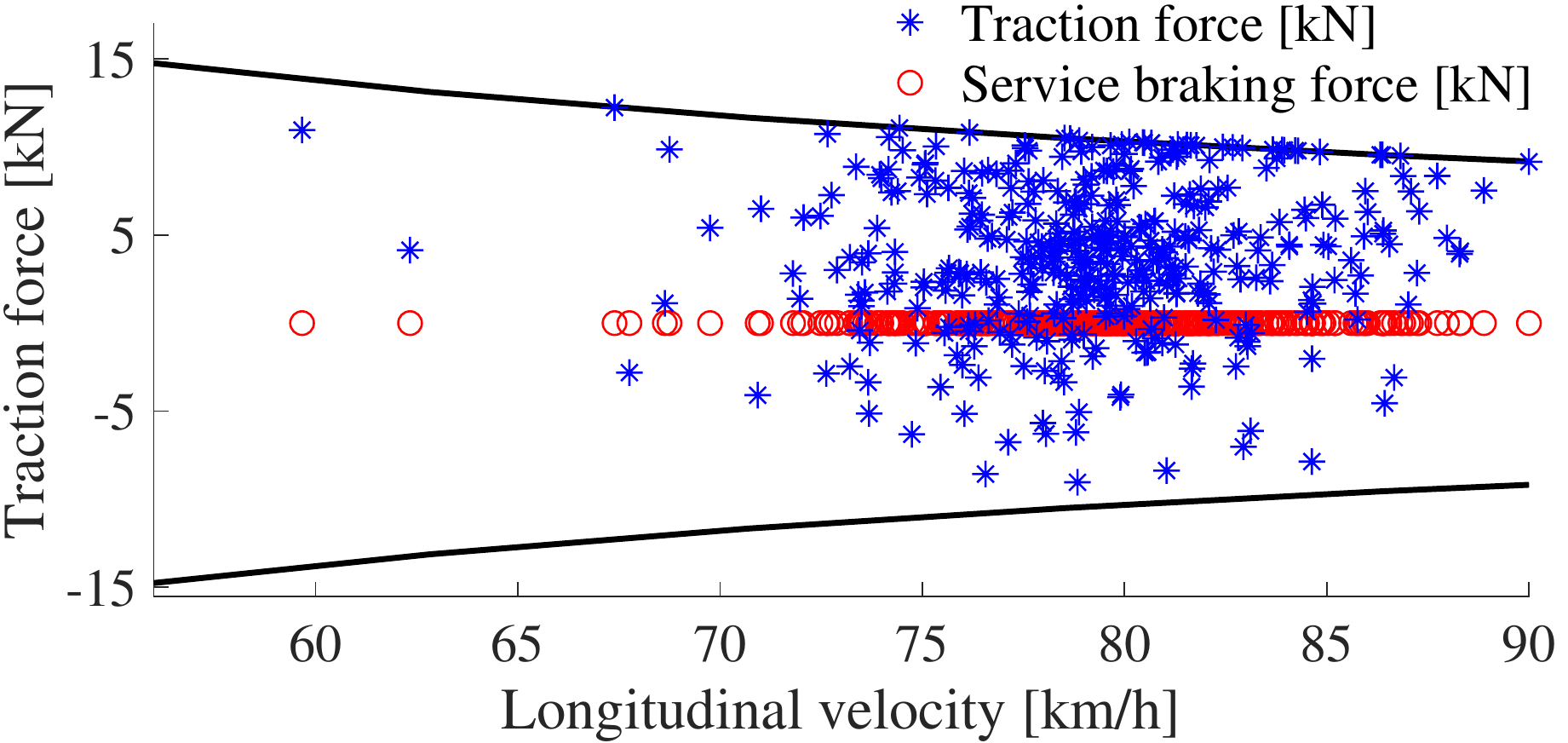}
\label{fig:opt_operate_ev}
}
\caption{Optimal longitudinal forces vs. vehicle speed for Case 2, i.e. when jerk is penalised.}
\label{fig:opt_operate}
\end{figure}

\begin{figure}[t!]
 \centering
 \subfigure[Optimal gear trajectory.]{
 \includegraphics[width=0.95\linewidth]{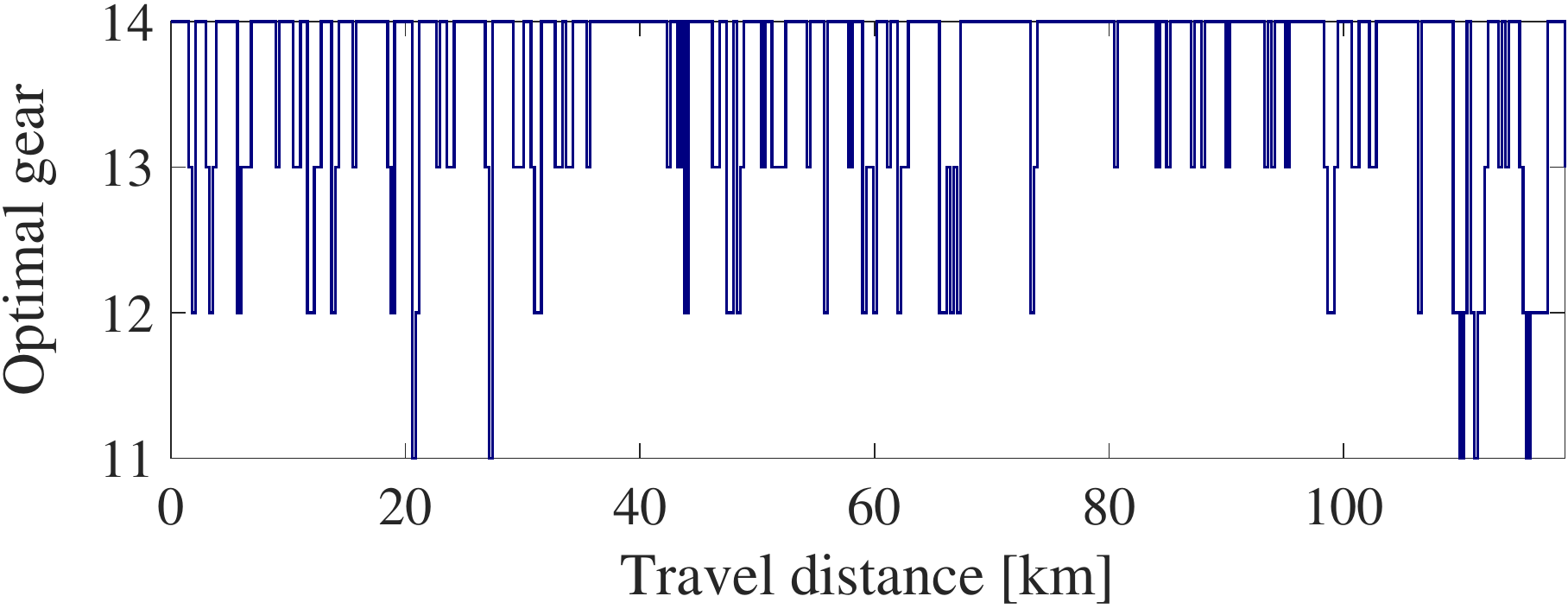}
 \label{fig:opt_gear_trj}
 }
 \subfigure[Optimal gear occurrence.]{
 \includegraphics[width=0.95\linewidth]{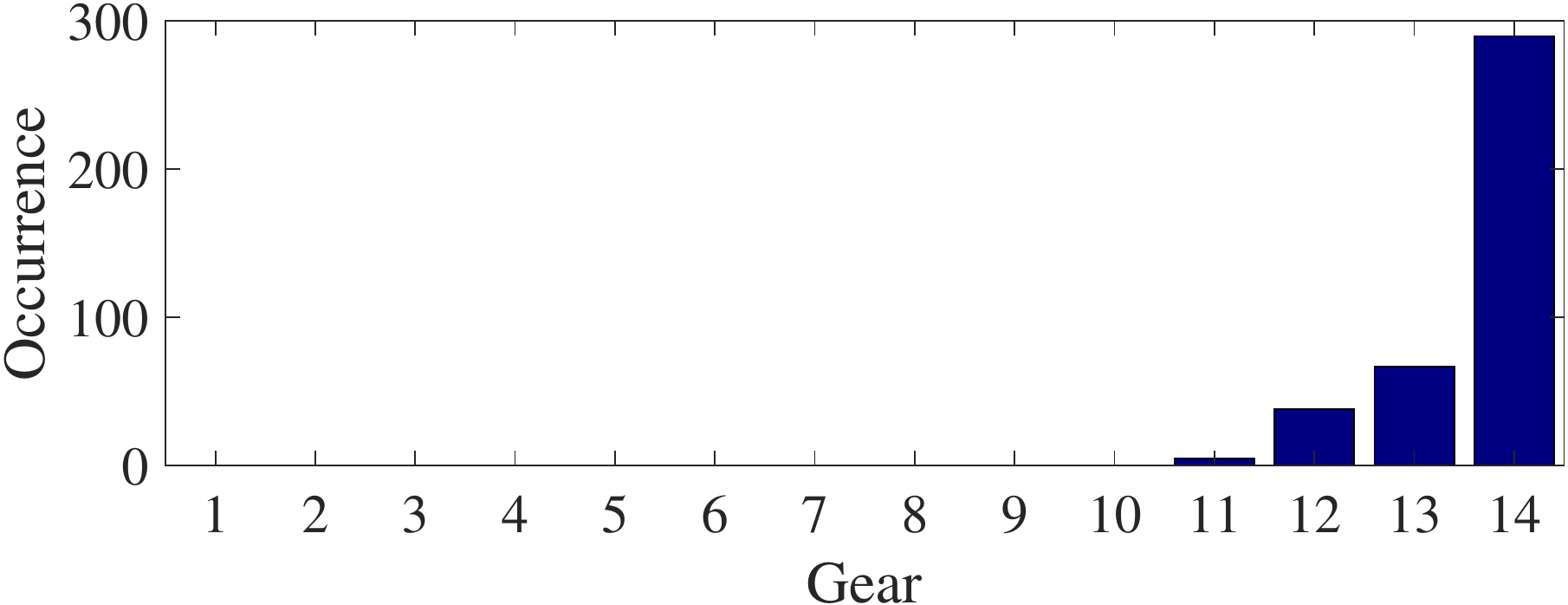}
 \label{fig:opt_gear_dis}
 }
 \caption{Optimal gear profiles of CV for Case 2, i.e. which corresponds to comfortable drive. The most frequent selected gear is $\gamma=14$.}
 \label{fig:opt_gear_trjdis}
\end{figure}

\subsection{Energy consumption vs. drivability}\label{subsec:res2}
To study the cost components, i.e. energy cost and the cost due to penalising discomfort, we compare three case studies: $\tx{Case}^\tx{hg}$ corresponds to a case with the heuristic guess for longitudinal velocity, $v_{\tx{hg}}$. For this case, the stage costs, \eqref{eq:p4_cv} and \eqref{eq:p4_ev} are calculated using \eqref{eq:v2E}, \eqref{eq:a_dyn_s} and \eqref{eq:j_dyn_s}.
In Case 1,  i.e. performance drive, the jerk penalty term in \eqref{eq:p4_cv} and \eqref{eq:p4_ev} is kept to zero; and in Case 2, i.e. comfortable drive, non-zero jerk penalty factor in \eqref{eq:p4_cv} and \eqref{eq:p4_ev} leads to smooth 
velocity. As an index to measure drivability, the root mean square (RMS) value of jerk
\begin{align}
    j_{\tx{RMS}}=\sqrt{\frac{1}{\tx{s}_\tx{f}}\int_{0}^{\tx{s}_\tx{f}}j^2(s)\tx{d}s}
\end{align}
is used. Note that we have observed the smooth speed profile could be achieved by only penalising jerk, thus the penalty coefficient on the acceleration, $\tx{w}_1$, is always kept to be zero for all three cases.

There is a trade-off between the energy cost and comfort, i.e. lower values of RMS jerk yield higher energy cost, see Fig. \ref{fig:cost_rmsj_cv} and Fig. \ref{fig:cost_rmsj_ev} for such trade-off for the CV and the EV respectively. Thus, vehicle manufacturers have wide range of choice to customise the vehicle's performance for a desired energy use and comfort. Note that RMS jerk saturates for large jerk penalty factors. Hereafter, the jerk penalty term in Case 2 is selected in a way that the RMS jerk is equal to \SI{0.0022}{m/s^3} for the CV and the EV.

Optimal longitudinal velocity, acceleration and jerk profiles of Case 1 and Case 2 for the CV and the EV are demonstrated in Fig. \ref{fig:opt_sig}. The velocity profiles without discomfort penalty, i.e. Case 1, are saw-tooth shaped and leads to more aggressive way of driving, however, the latter case provides smoother and more comfortable driving, see Fig. \ref{fig:v_opt_cv} and Fig. \ref{fig:v_opt_ev}. Note that in addition to the RMS jerk, the RMS acceleration is also reduced in Case 2 compared to Case 1 for the CV and the EV, whereas the acceleration is not penalised in either cases, see Fig. \ref{fig:acc_opt_cv}, Fig. \ref{fig:acc_opt_ev}, \ref{fig:jrk_opt_cv} and Fig. \ref{fig:jrk_opt_ev}. 

Optimal traction and braking force points for the Case 2, i.e. comfortable drive, of CV and EV are shown in Fig. \ref{fig:opt_operate}. Also, according to the optimal gear map in Fig. \ref{fig:gear_cv}, for a pair of total force and longitudinal velocity, the optimal gear is chosen. The optimised gear trajectory and distribution are shown in Fig. \ref{fig:opt_gear_trjdis}, where the most frequently selected gear is $\gamma=14$. We have observed similar results for Case 1 as well.

The cost results of the whole driving mission and their corresponding RMS jerk values for all three case studies of the CV and the EV are given in Table \ref{tab:cvev}. 

For the CV, the most fuel-efficient case is Case 1. There is a benefit of $15.71\%$ to optimize the velocity profile compared to the $\tx{Case}^\tx{hg}$, whereas the discomfort of the performance drive is accepted. Furthermore, the results show $10.14\%$ reduction in total cost of Case 2 compared to the $\tx{Case}^\tx{hg}$, despite having $2.66\%$ increase in fuel consumption compared to Case 1. As it has been expected, the proposed algorithm minimises the braking at the pads, i.e. the braking in Case 1 and Case 2 is significantly reduced compared to $\tx{Case}^\tx{hg}$.

For the EV, Case 1 provides $5.20\%$ reduction of the total energy cost compared to $\tx{Case}^\tx{hg}$ and the total cost benefit of Case 2 is $3.11\%$ compared to $\tx{Case}^\tx{hg}$. The comfortable drive, i.e. Case 2, leads to $0.49\%$ increase in electricity usage compared to the performance drive, i.e. Case 1. Note that the electricity cost in Case 2 is slightly worst than that in $\tx{Case}^\tx{hg}$, i.e. $0.04\%$, which implies that the heuristic guess is a proper guess. However, the RMS jerk in Case 2 is reduced by $40\%$ compared to $\tx{Case}^\tx{hg}$, i.e. the RMS jerk is reduced from \SI{0.0037}{m/s^3} to \SI{0.0022}{m/s^3}.   



\begin{table}[t]
\caption{Simulation results, energy consumption vs. drivability} \label{tab:cvev}
  \centering
  \setlength\tabcolsep{6pt}
 \begin{tabular}{|c| c c c |} 
 \hline
 \multicolumn{4}{| c |}{\textbf{CV}}\\
 \hline
 \begin{tabular}{c}Variable\end{tabular} & \begin{tabular}{c}$\tx{Case}^\tx{hg}$\end{tabular}& \begin{tabular}{c}Case 1\end{tabular}&
 \begin{tabular}{c}Case 2\end{tabular}\\
 \hline
$\tx{Fuel cost}$ [EUR] & 65.23 & 57.44 & 58.97 \\
$\tx{Drivability cost}$ [EUR] & 2.92 & 0 & 2.27 \\
\hline
$\tx{Total cost}$ [EUR] & 68.15 & 57.44 & 61.24 \\
Improvement [\si{\percent}]& - & 15.71 & 10.14\\
\hline 
$j_{\tx{RMS}}$ [\SI{}{m/s^3}] & 0.0024 & 0.0059 & 0.0022 \\
$||F_\tx{brk}||$ [\SI{}{kN}] & 45.30 & 24.90 & 25.10 \\
\hline
\hline
\multicolumn{4}{| c |}{\textbf{EV}}\\
 \hline
 \begin{tabular}{c}Variable\end{tabular} & \begin{tabular}{c}$\tx{Case}^\tx{hg}$\end{tabular}& \begin{tabular}{c}Case 1\end{tabular}&
 \begin{tabular}{c}Case 2\end{tabular}\\
 \hline
$\tx{Electricity cost}$ [EUR] & 24.54 & 24.42 & 24.55 \\
$\tx{Drivability cost}$ [EUR] & 1.22 & 0 & 0.41 \\
\hline
$\tx{Total cost}$ [EUR] & 25.76 & 24.42 & 24.96 \\ 
Improvement [\si{\percent}] & - & 5.20 &  3.10\\
\hline
$j_{\tx{RMS}}$ [\SI{}{m/s^3}] & 0.0037 & 0.012 & 0.0022 \\
$||F_\tx{brk}+\min(F, 0)||$ [\SI{}{kN}] & 44.90 & 25.50 & 35.70 \\
\hline
\end{tabular}
\end{table}

\begin{figure}[t]
\centering
\begin{tikzpicture}
\node[rotate=0] at(0,0){\includegraphics[width=0.95\linewidth]{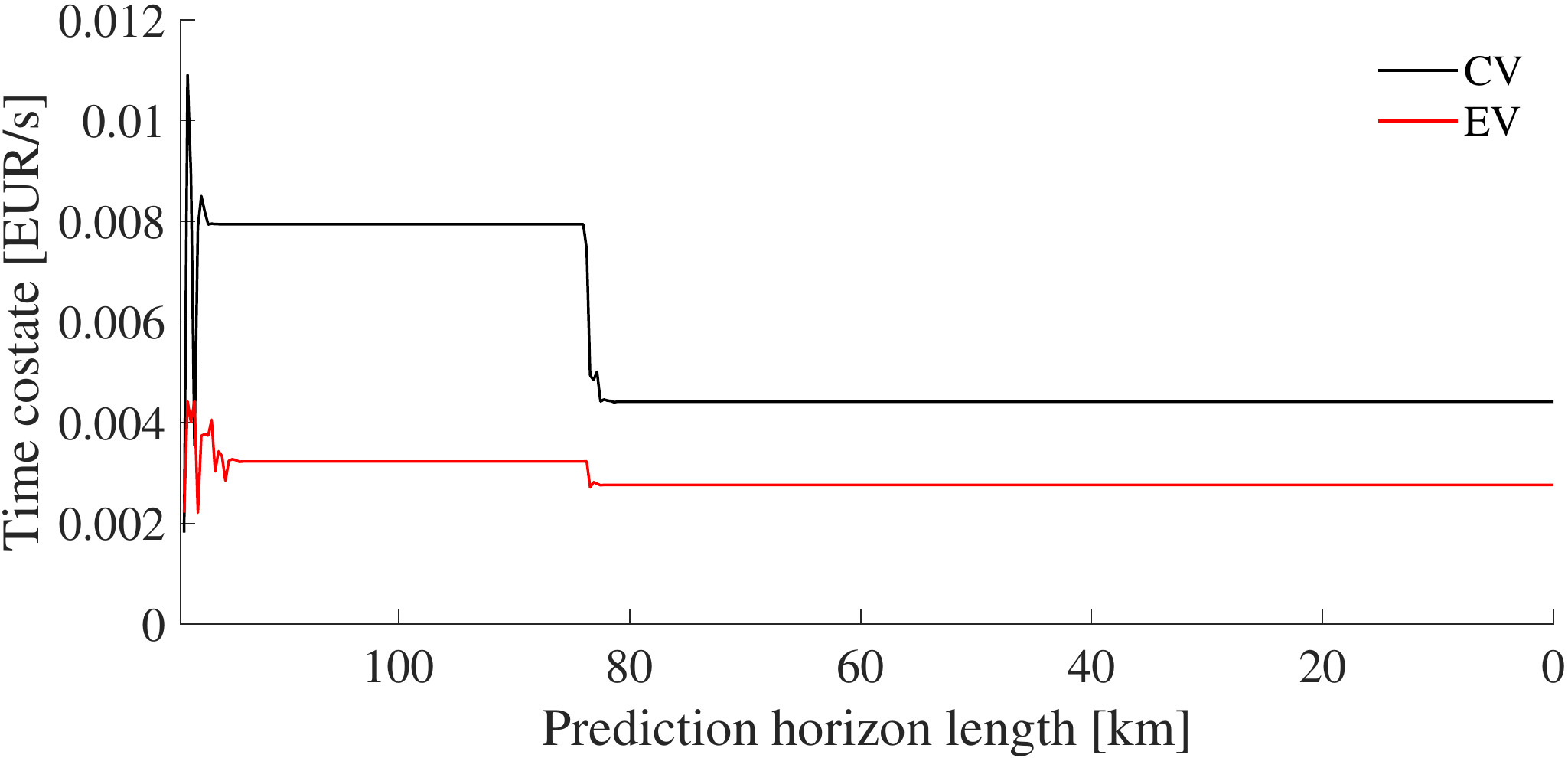}};
\end{tikzpicture}
\vspace{-.4cm}
\caption{\footnotesize Travel time costate vs. prediction horizon length. The costate converges after few MPC updates, even after disturbance is introduced (at horizon length of \SI{85}{km}) by suddenly increasing maximum travel time, e.g. due to traffic congestion. 
}
\label{fig:res3A_cvev}
\end{figure}

\begin{figure}[t]
\centering
\begin{tikzpicture}
\node[rotate=0] at(0,0){\includegraphics[width=0.95\linewidth]{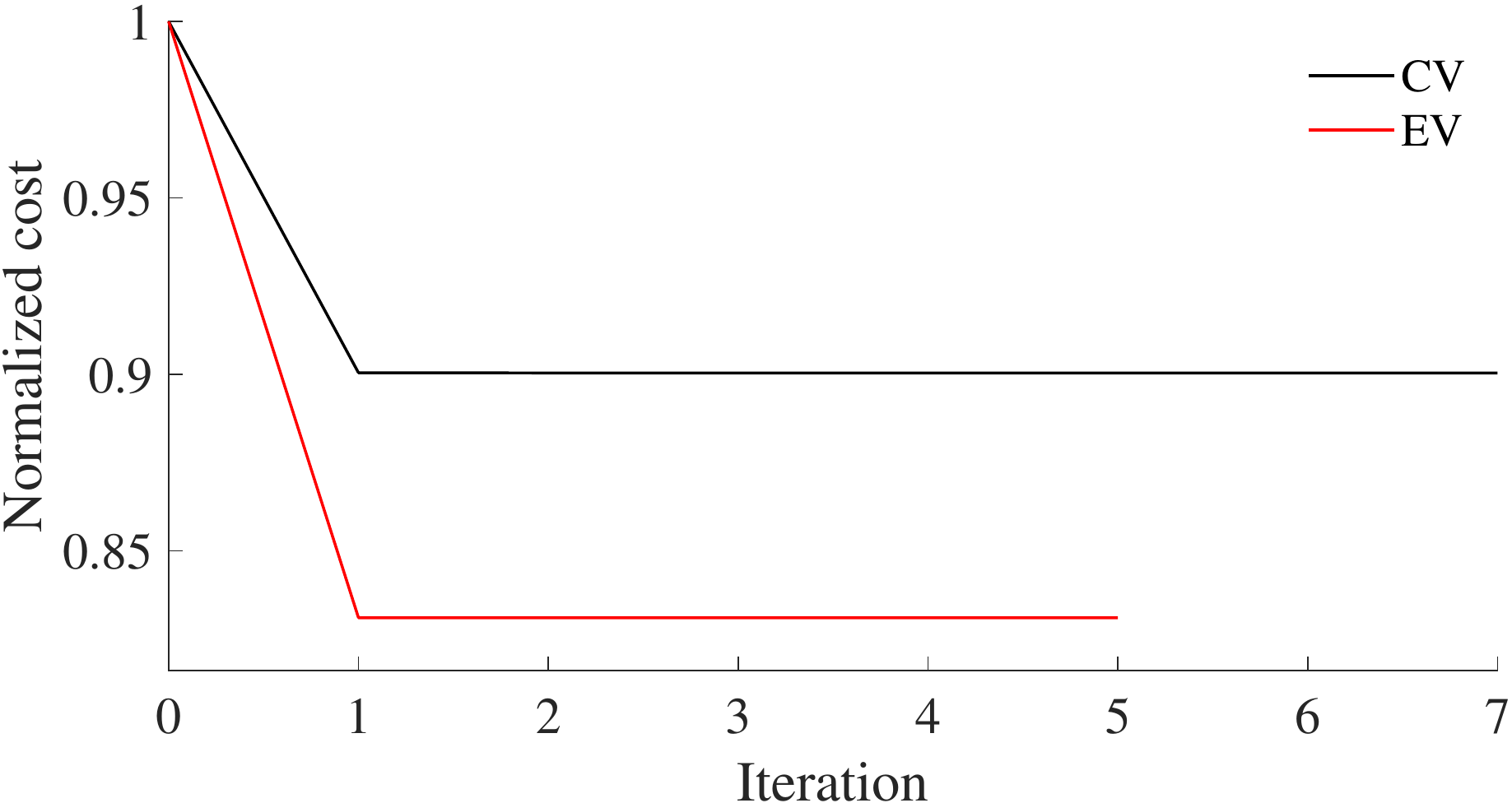}};
\end{tikzpicture}
\vspace{-.4cm}
\caption{\footnotesize SQP convergence profile. The cost value drops to within 0.4\% from optimum value in the first iteration. In iteration 0 the cost value is calculated when the vehicle is driving with the initial estimated trajectory, $v_{\tx{hg}}$.}
\label{fig:res3C_cvev}
\end{figure}

\subsection{Algorithm convergence}\label{subsec:res3}
The convergence curve of the time costate versus shrinking prediction horizon length is shown in Fig. \ref{fig:res3A_cvev}. According to the algorithm given in Appendix \ref{ap:lambda_opt}, the time costate is updated once per each MPC stage rather than waiting for the full costate convergence. It can be observed that after few initial MPC stages, the time costate converges to its optimum value. The disturbance rejection properties of the algorithm are verified in Fig. \ref{fig:res3A_cvev}. At the prediction horizon of \SI{85}{km}, maximum travel time changes due to e.g. traffic congestion. It can be seen in Fig. \ref{fig:res3A_cvev} that the travel time costate converges to its new value, which leads the vehicle to arrive to the final position within the updated maximum travel time.       

The convergence profile of the SQP algorithm is depicted in Fig. \ref{fig:res3C_cvev}, where the algorithm converges to an optimum obtained by solving \eqref{eq:p3} in $7$ iterations for CV and $5$ iterations for EV. However, the cost value drops to within $0.4\%$ from the optimum value in the first iteration. We exploit this behaviour through RTI in SHMPC framework, where only one QP is solved in each MPC update rather than waiting for the full SQP convergence, since the cost value in the first iteration is very close to the local optimum. Note that the cost value in iteration $0$ is calculated when the vehicle is driving with the initial estimated trajectory, $v_{\tx{hg}}$.

\begin{figure}
 \centering
 \includegraphics[width=0.95\linewidth]{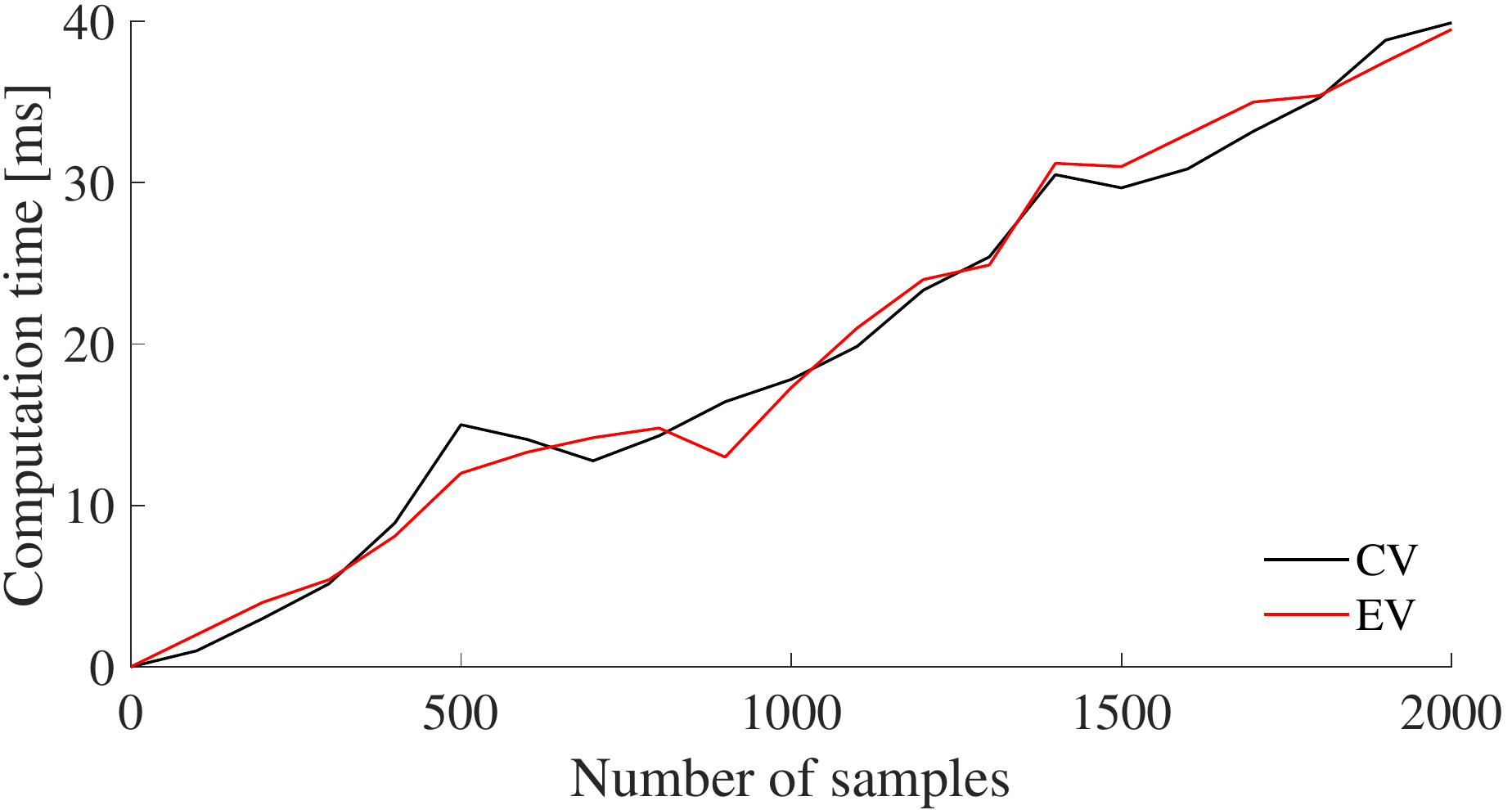}\vspace{-.3cm}
   \caption{\footnotesize Computation time vs. prediction horizon length using HPIPM for various resolutions of the prediction horizon. The computation time increases linearly with the number of samples.}
   \label{fig:ct_cvev}
\end{figure}

\subsection{Computation time}\label{subsec:res4}

The computation time profile for various sampling intervals is depicted in Fig. \ref{fig:ct_cvev} using HPIPM, where the entire route, \SI{118}{km}, is considered as the prediction horizon. The optimisation was run on a laptop PC with 6600K CPU at $2.81$GHz and $16$GB  RAM. The trend is that as the number of samples increases, the computation time also increases. For real-time applications, it is preferable to have small sampling interval, however the information on the topography should not be lost. In subsections \ref{subsec:res2} and \ref{subsec:res3}, the number of samples is kept to $400$ and the corresponding computation time for solving the problems \eqref{eq:p4_cv} and \eqref{eq:p4_ev} is less than $10$ms, which is considerably low value for a horizon of \SI{118}{km}.

\section{Conclusion}\label{sec:conclusion}
In this paper a computationally efficient algorithm is proposed for eco-driving over long look-ahead horizons. To this end, a bi-level program is formulated, where integer variable, i.e. gear, is decoupled from the real-valued variables. In the bottom level, the optimal gear map is derived in a way that the total energy consumption is minimised. In the top level, the remaining online implementable NLP is formulated. To provide more comfortable way of driving, acceleration and jerk of the vehicle are penalised in the top level's objective. In the NLP, the dynamics on travel time is adjoined to the objective function, using the necessary PMP conditions for optimality, since: 1) the Hamiltonian is not an explicit function of the travel time; 2) the travel time is strictly monotonically increasing function; and 3) the constraint on final time may activate at the final instant. The NLP is solved by applying RIT SQP scheme in MPC framework, i.e. the time costate and the linearization trajectory are updated once per each MPC update. The proposed algorithm is applied to a CV and an EV using SHMPC framework.

According to the simulation results, there is a trade-off between cost and comfort, i.e. driving comfortably is more expensive compared to the performance drive. The energy increase because of penalising the driver's discomfort is $2.66\%$ and $0.49\%$ for the CV and EV, respectively, where the RMS jerk is kept to \SI{0.0022}{m/s^3}. Also, by using the proposed algorithm, the total cost is reduced up to $15.71\%$ and $5.20\%$  for the CV and EV, respectively, compared to the case with initial velocity estimation. The computation time for the horizon of \SI{118}{km} is \SI{10}{ms}, the sampling interval is equal to \SI{300}{m}. For on-line applications, the small computation time can enhance the optimality, since the suggested optimal state of vehicle can be updated more frequently. Also, in off-line analysis the small computation time can be applied to multi-path problems, where the optimal path of the driving vehicle in terms of energy consumption can be obtained within a small amount of time. The presented algorithm in this paper can also be applied to HEVs as well, where the battery discharge trajectory is generated by the mission manager and delivered to lower control layers to charge depleting or charge sustaining operation. 


\appendices
\section{Newton method for finding optimal time costate}\label{ap:lambda_opt}
In this paper, a modified Newton method is applied to find the $\lambda_\tx{t}^*$. Let
\begin{align}
   f(\lambda_\tx{t}|\zeta)=t^*(\lambda_\tx{t},s_\tx{H}|\zeta) - t_\tx{H}(\zeta). \label{eq:f}
\end{align}

The rule for updating $\lambda_\tx{t}$ is 
\begin{align}
     \lambda_\tx{t}(\zeta^{+})=\lambda_\tx{t}(\zeta)-\frac{f(\lambda_\tx{t}|\zeta)}{\Tilde{f}^{'}(\lambda_\tx{t}|\zeta)}
     \label{eq:newton}
\end{align}
with
\begin{align}
    &\Tilde{f}^{'}(\lambda_\tx{t}|\zeta)=\min_{\lambda_\tx{t}}\bigg\{f^{'}(\lambda_\tx{t}|\zeta),f_{\max}^{'}\bigg\},\\
    &f^{'}(\lambda_\tx{t}|\zeta)=\frac{f(\lambda_\tx{t}|\zeta)-f(\lambda_\tx{t}|\zeta^{+})}{\lambda_\tx{t}(\zeta)-\lambda_\tx{t}(\zeta^{+})},\\
    &f_{\max}^{'}=\frac{f_{\max}-f_{\min}}{\lambda_\tx{t}^{\min}-\lambda_\tx{t}^{\max}}
\end{align}
where $\lambda_\tx{t}^{\min}=0 \tx{EUR}/s$ is the minimum and $\lambda_\tx{t}^{\max}$ is the maximum time costate. Also, $f_{\min}$ and $f_{\max}$ are
\begin{align}
    &f_{\min}=t^*(\lambda_\tx{t}^{\max},s_\tx{H}|\zeta) - t_\tx{H}(\zeta), \label{eq:fminNewton}\\
    &f_{\max}=t^*(\lambda_\tx{t}^{\min},s_\tx{H}|\zeta) - t_\tx{H}(\zeta). \label{eq:fmaxNewton}
\end{align}

To speed up the convergence to $\lambda_\tx{t}^*$ in \eqref{eq:newton}, it is possible to warm start the algorithm by initialising $\lambda_\tx{t}$ at two consecutive instances $\zeta=0$ and $\zeta=0^{+}$, as 
\begin{align}
    &\lambda_\tx{t}(0)=\lambda_\tx{t}^{\min}-\frac{f_{\max}}{f_{\max}^{'}}\\
    &\lambda_\tx{t}(0^{+})=\lambda_\tx{t}(0)-\frac{f(\lambda_\tx{t}|0)}{f_{\max}^{'}}.
\end{align} 
where $\lambda_\tx{t}(0)$ is simply the intersection point of $f(\lambda_\tx{t}|0)=0$ with a line connecting the two points $(\lambda_\tx{t}^{\min},f_{\max})$ and $(\lambda_\tx{t}^{\max},f_{\min})$. 

\section{Inner approximation of traction force limits}\label{ap:lin_prg}
To approximate the force limits 
as inner approximations of the original nonlinear and non-smooth limits, a linear program is solved as:
\begin{align}
& J=\min_{x} \left( f^Tx \right) \nonumber\\
& \text{subject to} \nonumber\\
& Ax \leq b
\end{align}
such that the area between actual force limits and their approximations is minimised. Therefore, the area between the approximated force limit and the line $F=0$ is maximised. To this end, for the minimum force limit
\begin{align}
          J=\min_{x}\int_{\tx{v}_0}^{\tx{v}_\tx{max}}(\tx{x}_0+\frac{\tx{x}_1}{v})\tx{d}v
\end{align}
and for the maximum force limit
\begin{align}
          J=\min_{x}\int_{\tx{v}_0}^{\tx{v}_\tx{max}}-(\tx{y}_0+\frac{\tx{y}_1}{v})\tx{d}v.
\end{align}
Thus,
\begin{align}
          A = \begin{bmatrix}
           1 &
           \frac{1}{v} 
         \end{bmatrix},
\end{align}
for the minimum force limit, $f,b,x$ are defined as
\begin{align}
& f = -\begin{bmatrix}
           \tx{v}_\tx{max}-\tx{v}_0 \\
           \ln(\tx{v}_\tx{max}) - \ln(\tx{v}_0) \\
         \end{bmatrix}, \ b=F_{\gamma\tx{min}}(v), \ x = \begin{bmatrix}
           \tx{x}_0 \\
           \tx{x}_1 
         \end{bmatrix}
\end{align}
and for the maximum force limit as
\begin{align}
& f = \begin{bmatrix}
           \tx{v}_\tx{max}-\tx{v}_0 \\
           \ln(\tx{v}_\tx{max}) - \ln(\tx{v}_0) \\
         \end{bmatrix}, \ b=F_{\gamma\tx{max}}(v), \quad x = \begin{bmatrix}
           \tx{y}_0 \\
           \tx{y}_1 
         \end{bmatrix}.
\end{align}

The vehicle speed, $v$, is allowed to vary between two limits 
\begin{align}
& v \in [\tx{v}_0, \tx{v}_\tx{max}]
\nonumber
\end{align}
where for CV $\tx{v}_0$=\SI{8}{km/h} and for EV $\tx{v}_0$=\SI{55}{km/h}, and $\tx{v}_\tx{max}$ is the maximum reachable speed by the vehicle. In this formulation, the idea is to minimize the area between the original force limit and the inner approximation.
\section{Full statement of convex optimal energy consumption program}\label{ap:full_stat}

Here, the full statement of convex optimal energy consumption problem is given for CV and EV case studies. To this end, the nonlinear term $f(E)=1/\sqrt{E(s)}$ in \eqref{eq:cv_force_max} is linearized about a trajectory $\hat E(s)$,
\begin{align} 
    f^{\tx{lin}}(E,\hat E) \approx f(\hat E) + \left. \frac{\tx{d}f(E)}{\tx{d}E} \right|_{\hat E} (E(s)-\hat E(s)).
\end{align}
Thus, \eqref{eq:cv_force_max} is transformed into
\begin{align}
    F^\tx{lin}_{\gamma\tx{max}}(E)=
    \min\left\{\overline{F},\tx{y}_\tx{0}+\tx{y}_\tx{1}\sqrt{\frac{m}{2}}f^{\tx{lin}}(E,\hat E)\right\}
   \label{eq:cv_lin_Fmax}
\end{align}
and by using \eqref{eq:a_max},
\begin{align}
    a^\tx{lin}_\tx{max}(E)=\min\left\{\overline{\tx{a}},\frac{F^\tx{lin}_{\gamma\tx{max}}(E) -\tx{c}_\tx{a}E-F_\alpha}{\tx{m}} \right\}. \label{eq:cv_lin_a_max}
\end{align}
Also by having $F_{\gamma\tx{min}}^\tx{lin}(E)=0$ for the CV case study,
\begin{align}
    a^\tx{lin}_\tx{min}(E)=\max\left\{\underline{\tx{a}}, \frac{-\tx{c}_\tx{a}E+\underline{\tx{F}_\tx{brk}}-F_\alpha}{\tx{m}} \right \}. \label{eq:cv_lin_a_min}
\end{align}

The convex dynamic optimisation problem for the CV case study is now formulated as
{\allowdisplaybreaks
\begin{subequations} \label{eq:p4_cv}
\begin{align}
&\min_{j,F_\tx{brk}}\int_{0}^{s_\tx{H}} V_{\tx{CV}}(\cdot,\lambda_\tx{t},\hat E)\tx{d}s 
\label{eq:p4}\\
&\text{subject to:}\nonumber\\
&E'(s)=\tx{m}a(s) \label{eq:p4cv_E}\\
&a'(s)=j(s) \label{eq:p4cv_a}\\
&F(s)=\tx{m}a(s)+\tx{c}_\tx{a}E(s)-F_\tx{brk}(s)+F_\alpha(s)\\
&E(s)\in \frac{\tx{m}}{2} [v_\tx{min}^2(s), v_\tx{max}^2(s)] \label{eq:p4cv_Ebound}\\
&a(s)\in [a_\tx{min}^\tx{lin}(E),a_\tx{max}^\tx{lin}(E)]\label{eq:p4cv_abound}\\
& j(s)\in [\underline{\tx{j}}, \overline{\tx{j}}]\label{eq:p4cv_jbound}\\
& F_\tx{brk}(s) \in [\underline{\tx{F}_\tx{brk}},0]\label{eq:p4cv_Fbrkbound} \\
&E(0)=E_0, \quad a(0)=a_0 \label{eq:p4cv_int}
\end{align}%
\end{subequations}}%

After each SQP iteration, which occurs at each distance step forward, the trajectory about which that the problem is linearized is updated by moving towards the direction of the current optimal solution, i.e. 
\begin{align}
    \hat E^{(i+1)}(k) = \hat E^{(i)}(k) + \beta (E^{*(i)}(k) - \hat E^{(i)}(k)).
\end{align}
where $\beta$ is the step size that regulates the convergence rate.

For the EV case study, \eqref{eq:ev_force_min} is transformed into
\begin{align}
    F^\tx{lin}_{\gamma\tx{min}}(E)= \max\left\{\underline{F},f^{\tx{lin}}(E,\hat E)\right\}
   \label{eq:ev_lin_Fmin}
\end{align}
using the linearized function, $f^{\tx{lin}}(E,\hat E)$. Therefore, by using \eqref{eq:a_max}
\begin{align}
    a^\tx{lin}_\tx{min}(E)=\max\left\{\overline{\tx{a}},\frac{F^\tx{lin}_{\gamma\tx{min}}(E) -\tx{c}_\tx{a}E-F_\alpha}{\tx{m}} \right\}. \label{eq:ev_lin_a_min}
\end{align}

Note that the maximum traction force limit for EV is approximated by \eqref{eq:cv_force_max}. Accordingly, the maximum linearized acceleration is calculated by \eqref{eq:cv_lin_a_max}.

The convex dynamic optimisation problem for the EV case study is formulated as
{\allowdisplaybreaks
\begin{subequations} \label{eq:p4_ev}
\begin{align}
&\min_{j,F_\tx{brk}} \int_{0}^{s_\tx{H}} V_{\tx{EV}}(\cdot,\lambda_\tx{t},\hat E)\tx{d}s\\
&\text{subject to: \eqref{eq:p4cv_E}-\eqref{eq:p4cv_int}.}
\end{align}%
\end{subequations}}%

\section*{Acknowledgment}
This work has been financed by the Swedish Energy Agency (project number: 32226312). The authors would also like to
acknowledge Martin Sivertsson from Volvo Cars, Mikael Askerdal from Volvo Truck, and Henrik Sv{\"a}rd and Karl Redbrandt from Scania for the support and helpful discussions during the project.

\bibliographystyle{IEEEtran}
\bibliography{bibliography_ah}

\begin{thebibliography}{10}
\providecommand{\url}[1]{#1}
\csname url@samestyle\endcsname
\providecommand{\newblock}{\relax}
\providecommand{\bibinfo}[2]{#2}
\providecommand{\BIBentrySTDinterwordspacing}{\spaceskip=0pt\relax}
\providecommand{\BIBentryALTinterwordstretchfactor}{4}
\providecommand{\BIBentryALTinterwordspacing}{\spaceskip=\fontdimen2\font plus
\BIBentryALTinterwordstretchfactor\fontdimen3\font minus
  \fontdimen4\font\relax}
\providecommand{\BIBforeignlanguage}[2]{{%
\expandafter\ifx\csname l@#1\endcsname\relax
\typeout{** WARNING: IEEEtran.bst: No hyphenation pattern has been}%
\typeout{** loaded for the language `#1'. Using the pattern for}%
\typeout{** the default language instead.}%
\else
\language=\csname l@#1\endcsname
\fi
#2}}
\providecommand{\BIBdecl}{\relax}
\BIBdecl

\bibitem{oecditf19}
I.~T. Forum, \emph{ITF Transport Outlook 2019}.\hskip 1em plus 0.5em minus
  0.4em\relax OECD Publishing/ITF, 2019.

\bibitem{oecdiea18}
I.~E. Agency, ``Co2 emissions from fuel combustion 2018,'' p. 515, 2018.

\bibitem{kamal11}
M.~A.~S. Kamal, M.~Mukai, J.~Murata, and T.~Kawabe, ``Ecological vehicle
  control on roads with up-down slopes,'' \emph{IEEE Transactions on
  Intelligent Transportation Systems}, vol.~12, no.~3, pp. 783--794, 2011.

\bibitem{kamal12}
------, ``Model predictive control of vehicles on urban roads for improved fuel
  economy,'' \emph{IEEE Transactions on control systems technology}, vol.~21,
  no.~3, pp. 831--841, 2012.

\bibitem{vajedi15}
M.~Vajedi and N.~L. Azad, ``Ecological adaptive cruise controller for plug-in
  hybrid electric vehicles using nonlinear model predictive control,''
  \emph{IEEE Transactions on Intelligent Transportation Systems}, vol.~17,
  no.~1, pp. 113--122, 2015.

\bibitem{luo15}
Y.~Luo, T.~Chen, S.~Zhang, and K.~Li, ``Intelligent hybrid electric vehicle acc
  with coordinated control of tracking ability, fuel economy, and ride
  comfort,'' \emph{IEEE Transactions on Intelligent Transportation Systems},
  vol.~16, no.~4, pp. 2303--2308, 2015.

\bibitem{barkenbus2010eco}
J.~N. Barkenbus, ``Eco-driving: An overlooked climate change initiative,''
  \emph{Energy Policy}, vol.~38, no.~2, pp. 762--769, 2010.

\bibitem{bellman57}
R.~Bellman, \emph{{Dynamic Programming}}.\hskip 1em plus 0.5em minus
  0.4em\relax New Jersey: Princeton Univ Pr, 1957.

\bibitem{hellstrom09}
E.~Hellstr\"om, M.~Ivarsson, J.~\r{A}slund, and L.~Nielsen, ``Look-ahead
  control for heavy trucks to minimize trip time and fuel consumption,''
  \emph{Control Engineering Practice}, vol.~17, no.~2, pp. 245--254, 2009.

\bibitem{hellstrom10}
E.~Hellstr\"om, J.~\r{A}slund, and L.~Nielsen, ``Design of an efficient
  algorithm for fuel-optimal look-ahead control,'' \emph{Control Engineering
  Practice}, vol.~18, no.~11, pp. 1318--1327, 2010.

\bibitem{dib11}
W.~Dib, L.~Serrao, and A.~Sciarretta, ``Optimal control to minimize trip time
  and energy consumption in electric vehicles,'' in \emph{2011 IEEE Vehicle
  Power and Propulsion Conference}.\hskip 1em plus 0.5em minus 0.4em\relax
  IEEE, 2011, pp. 1--8.

\bibitem{heppeler16a}
G.~Heppeler, M.~Sonntag, U.~Wohlhaupter, and O.~Sawodny, ``Predictive planning
  of optimal velocity and state of charge trajectories for hybrid electric
  vehicles,'' \emph{Control Engineering Practice}, vol.~61, pp. 229--243, 2016.

\bibitem{wahl13}
H.-G. Wahl, K.-L. Bauer, F.~Gauterin, and M.~Holz{\"a}pfel, ``A real-time
  capable enhanced dynamic programming approach for predictive optimal cruise
  control in hybrid electric vehicles,'' in \emph{16th International IEEE
  Conference on Intelligent Transportation Systems (ITSC 2013)}.\hskip 1em plus
  0.5em minus 0.4em\relax IEEE, 2013, pp. 1662--1667.

\bibitem{buhler13}
L.~B\"uhler, ``Fuel-efficient platooning of heavy duty vehicles through road
  topography preview information,'' Master's thesis, KTH, Stockholm, Sweden,
  2013.

\bibitem{themann15}
P.~Themann, A.~Zlocki, and L.~Eckstein, \emph{Energieeffiziente
  Fahrzeugl{\"a}ngsf{\"u}hrung durch V2X-Kommunikation}.\hskip 1em plus 0.5em
  minus 0.4em\relax Wiesbaden: Springer Fachmedien Wiesbaden, 2015, pp. 27--33.

\bibitem{boyd04}
S.~Boyd and L.~Vandenberghe, \emph{Convex Optimization}.\hskip 1em plus 0.5em
  minus 0.4em\relax Cambridge University Press, 2004.

\bibitem{hellstrom10b}
E.~Hellstr{\"o}m, J.~\r{A}slund, and L.~Nielsen, ``Management of kinetic and
  electric energy in heavy trucks,'' \emph{SAE International Journal of
  Engines}, vol.~3, no.~1, pp. 1152--1163, 2010.

\bibitem{keulen10}
T.~van Keulen, B.~de~Jager, D.~Foster, and M.~Steinbuch, ``Velocity trajectory
  optimization in hybrid electric trucks,'' in \emph{American Control
  Conference}, Marriott Waterfront, Baltimore, MD, USA, 2010, pp. 5074--5079.

\bibitem{keulen11}
T.~van Keulen, B.~de~Jager, and M.~Steinbuch, ``Optimal trajectories for
  vehicles with energy recovery options,'' in \emph{IFAC World Congress},
  Milan, Italy, 2011, pp. 3831--3836.

\bibitem{pontryagin62}
L.~S. Pontryagin, V.~G. Boltyanskii, R.~V. Gamkrelidze, and E.~F. Mishchenko,
  \emph{The Mathematical Theory of Optimal Processes}.\hskip 1em plus 0.5em
  minus 0.4em\relax Interscience Publishers, 1962.

\bibitem{held18}
M.~Held, O.~Fl{\"a}rdh, and J.~M{\aa}rtensson, ``Optimal speed control of a
  heavy-duty vehicle in urban driving,'' \emph{IEEE Transactions on Intelligent
  Transportation Systems}, vol.~20, no.~4, pp. 1562--1573, 2018.

\bibitem{Keulen14}
T.~van Keulen, J.~Gillot, B.~de~Jager, and M.~Steinbuch, ``Solution for state
  constrained optimal control problems applied to power split control for
  hybrid vehicles,'' \emph{Automatica}, vol.~50, no.~1, pp. 187--192, 2014.

\bibitem{murgovski16}
N.~Murgovski, B.~Egardt, and M.~Nilsson, ``Cooperative energy management of
  automated vehicles,'' \emph{Control Engineering Practice}, vol.~57, pp.
  84--98, 2016.

\bibitem{johannesson15a}
L.~Johannesson, N.~Murgovski, E.~Jonasson, J.~Hellgren, and B.~Egardt,
  ``Predictive energy management of hybrid long-haul trucks,'' \emph{Control
  Engineering Practice}, vol.~41, pp. 83--97, 2015.

\bibitem{johannesson15b}
L.~Johannesson, M.~Nilsson, and N.~Murgovski, ``Look-ahead vehicle energy
  management with traffic predictions,'' in \emph{IFAC Workshop on Engine and
  Powertrain Control, Simulation and Modeling (E-COSM)}, vol.~48, Columbus,
  Ohio, USA, 2015, pp. 244--251.

\bibitem{hovgard18}
M.~Hovgard, O.~Jonsson, N.~Murgovski, M.~Sanfridson, and J.~Fredriksson,
  ``Cooperative energy management of electrified vehicles on hilly roads,''
  \emph{Control Engineering Practice}, vol.~73, pp. 66--78, 2018.

\bibitem{uebel17}
S.~Uebel, N.~Murgovski, C.~Tempelhahn, and B.~B{\"a}ker, ``Optimal energy
  management and velocity control of hybrid electric vehicles,'' \emph{IEEE
  Transactions on Vehicular Technology}, vol.~67, no.~1, pp. 327--337, 2017.

\bibitem{guo18}
L.~Guo, H.~Chen, Q.~Liu, and B.~Gao, ``A computationally efficient and
  hierarchical control strategy for velocity optimization of on-road
  vehicles,'' \emph{IEEE Transactions on Systems, Man, and Cybernetics:
  Systems}, vol.~49, no.~1, pp. 31--41, 2018.

\bibitem{turri16}
V.~Turri, B.~Besselink, and K.~H. Johansson, ``Cooperative look-ahead control
  for fuel-efficient and safe heavy-duty vehicle platooning,'' \emph{IEEE
  Transactions on Control Systems Technology}, vol.~25, no.~1, pp. 12--28,
  2016.

\bibitem{guo16}
L.~Guo, B.~Gao, Y.~Gao, and H.~Chen, ``Optimal energy management for hevs in
  eco-driving applications using bi-level mpc,'' \emph{IEEE Transactions on
  Intelligent Transportation Systems}, vol.~18, no.~8, pp. 2153--2162, 2016.

\bibitem{stroe17}
N.~Stroe, S.~Olaru, G.~Colin, K.~Ben-Cherif, and Y.~Chamaillard, ``A two-layer
  predictive control for hybrid electric vehicles energy management,''
  \emph{IFAC-PapersOnLine}, vol.~50, no.~1, pp. 10\,058--10\,064, 2017.

\bibitem{uebel19}
S.~Uebel, N.~Murgovski, B.~Baker, and J.~Sjoberg, ``A 2-level mpc for energy
  management including velocity control of hybrid electric vehicle,''
  \emph{IEEE Transactions on Vehicular Technology}, 2019.

\bibitem{hanson17}
B.~B. Hanson and T.~E. Hanson, ``Systems and methods for multi-mode unmanned
  vehicle mission planning and control,'' Jun.~6 2017, uS Patent 9,669,904.

\bibitem{hamednia18}
A.~Hamednia, N.~Murgovski, and J.~Fredriksson, ``Predictive velocity control in
  a hilly terrain over a long look-ahead horizon,'' \emph{IFAC-PapersOnLine},
  vol.~51, no.~31, pp. 485--492, 2018.

\bibitem{diehl01}
M.~Diehl, ``Real-time optimization for large scale nonlinear processes,'' Ph.D.
  dissertation, University of Heidelberg, 2001.

\bibitem{thomas94}
M.~M. Thomas, J.~Kardos, and B.~Joseph, ``Shrinking horizon model predictive
  control applied to autoclave curing of composite laminate materials,'' in
  \emph{Proceedings of 1994 American Control Conference-ACC'94}, vol.~1.\hskip
  1em plus 0.5em minus 0.4em\relax IEEE, 1994, pp. 505--509.

\bibitem{murgovski13china}
N.~Murgovski, X.~Hu, L.~Johannesson, and B.~Egardt, ``Filtering driving cycles
  for assessment of electrified vehicles,'' in \emph{Workshop for new energy
  vehicle dynamic system and control technology}, Beijing, China, 2013.

\bibitem{lipp14}
T.~Lipp and S.~Boyd, ``Minimum-time speed optimization along a fixed path,''
  \emph{International Journal of Control}, vol.~87, no.~6, pp. 1297--1311,
  2014.

\bibitem{murgovski15ACC}
N.~Murgovski, L.~Johannesson, X.~Hu, B.~Egardt, and J.~Sj\"oberg, ``Convex
  relaxations in the optimal control of electrified vehicles,'' in
  \emph{American Control Conference}, Chicago, USA, 2015.

\bibitem{castro16}
R.~de~Castro, M.~Tanelli, R.~E. Ara\'ujo, and S.~M. Savaresi, ``Minimum-time
  path-following for highly redundant electric vehicles,'' \emph{IEEE
  Transactions on Control Systems Technology}, vol.~24, no.~2, pp. 487--501,
  2016.

\bibitem{diehl05}
M.~Diehl, H.~G. Bock, and J.~P. Schl{\"o}der, ``A real-time iteration scheme
  for nonlinear optimization in optimal feedback control,'' \emph{SIAM Journal
  on control and optimization}, vol.~43, no.~5, pp. 1714--1736, 2005.

\bibitem{wahlstrom11}
J.~Wahlstr\"om and L.~Eriksson, ``Modelling diesel engines with a
  variable-geometry turbocharger and exhaust gas recirculation by optimization
  of model parameters for capturing non-linear system dynamics,'' \emph{Sage
  journals}, vol. 225, no.~7, pp. 960--986, 2012.

\bibitem{eriksson16}
L.~Eriksson, A.~Larsson, A.~Thomasson, and S.~C. Ab, ``Heavy duty truck on open
  road--the aac2016 benchmark,'' in \emph{IFAC Symposium on Advances in
  Automotive Control}, 2016.

\end{thebibliography}

\end{document}